\documentclass[aps,prd,reprint,superscriptaddress,nofootinbib,longbibliography,floatfix]{revtex4-2}

\usepackage[T1]{fontenc}
\usepackage[utf8]{inputenc}
\usepackage{lmodern}
\usepackage{microtype}
\usepackage{amsmath,amssymb,amsthm,mathtools,bm}
\usepackage{booktabs}
\usepackage{graphicx}
\usepackage{enumitem}
\usepackage{aliascnt}
\usepackage{hyperref}
\usepackage[nameinlink,noabbrev]{cleveref}

\hypersetup{
 hidelinks,
 pdftitle={Prelocality No-Go Theorems and Exactly Solvable Relational History Hamiltonians},
 pdfauthor={Maxim V. Churilov},
 pdfsubject={Permutation-covariant prelocal dynamics and exact relational history Hamiltonians},
 pdfkeywords={prelocality, permutation symmetry, history states, quantum graphity, spectral gap, relational dynamics}
}
\ifdefined\pdfinfoomitdate\pdfinfoomitdate=1\fi
\ifdefined\pdftrailerid\pdftrailerid{}\fi
\ifdefined\pdfsuppressptexinfo\pdfsuppressptexinfo=-1\fi
\setlist{itemsep=0.15em,topsep=0.35em,parsep=0pt}
\allowdisplaybreaks
\theoremstyle{plain}
\newtheorem{theorem}{Theorem}[section]
\newaliascnt{proposition}{theorem}
\newtheorem{proposition}[proposition]{Proposition}
\aliascntresetthe{proposition}
\newaliascnt{lemma}{theorem}
\newtheorem{lemma}[lemma]{Lemma}
\aliascntresetthe{lemma}
\newaliascnt{corollary}{theorem}
\newtheorem{corollary}[corollary]{Corollary}
\aliascntresetthe{corollary}
\newaliascnt{nogotheorem}{theorem}
\newtheorem{nogotheorem}[nogotheorem]{No-Go Theorem}
\aliascntresetthe{nogotheorem}
\newaliascnt{rigiditytheorem}{theorem}

\aliascntresetthe{rigiditytheorem}

\theoremstyle{definition}
\newaliascnt{definition}{theorem}

\aliascntresetthe{definition}
\newaliascnt{construction}{theorem}
\newtheorem{construction}[construction]{Construction}
\aliascntresetthe{construction}

\theoremstyle{remark}
\newaliascnt{remark}{theorem}
\newtheorem{remark}[remark]{Remark}
\aliascntresetthe{remark}
\newaliascnt{warning}{theorem}

\aliascntresetthe{warning}

\crefname{theorem}{Theorem}{Theorems}
\Crefname{theorem}{Theorem}{Theorems}
\crefname{nogotheorem}{No-Go Theorem}{No-Go Theorems}
\Crefname{nogotheorem}{No-Go Theorem}{No-Go Theorems}
\crefname{rigiditytheorem}{Rigidity Theorem}{Rigidity Theorems}
\Crefname{rigiditytheorem}{Rigidity Theorem}{Rigidity Theorems}
\crefname{proposition}{Proposition}{Propositions}
\Crefname{proposition}{Proposition}{Propositions}
\crefname{lemma}{Lemma}{Lemmas}
\Crefname{lemma}{Lemma}{Lemmas}
\crefname{corollary}{Corollary}{Corollaries}
\Crefname{corollary}{Corollary}{Corollaries}
\crefname{definition}{Definition}{Definitions}
\Crefname{definition}{Definition}{Definitions}
\crefname{construction}{Construction}{Constructions}
\Crefname{construction}{Construction}{Constructions}
\crefname{remark}{Remark}{Remarks}
\Crefname{remark}{Remark}{Remarks}
\crefname{warning}{Warning}{Warnings}
\Crefname{warning}{Warning}{Warnings}

\newcommand{\C}{\mathbb C}

\newcommand{\Z}{\mathbb Z}
\newcommand{\id}{\mathbf 1}
\newcommand{\Hh}{\mathcal H}
\newcommand{\D}{\mathcal D}
\newcommand{\T}{\mathcal T}

\newcommand{\Tr}{\operatorname{Tr}}
\newcommand{\Ker}{\operatorname{ker}}
\newcommand{\Ran}{\operatorname{Ran}}
\newcommand{\spec}{\operatorname{spec}}

\newcommand{\diam}{\operatorname{diam}}
\newcommand{\Aut}{\operatorname{Aut}}

\newcommand{\norm}[1]{\left\lVert #1\right\rVert}
\newcommand{\abs}[1]{\left\lvert #1\right\rvert}
\newcommand{\ket}[1]{\lvert #1\rangle}
\newcommand{\bra}[1]{\langle #1\rvert}
\newcommand{\ketbra}[2]{\lvert #1\rangle\!\langle #2\rvert}

\newcommand{\Hcyc}{H^{\mathrm{cyc}}_{N,L}}
\newcommand{\Hbool}{H^{\mathrm{B}}_N}
\newcommand{\EL}{\mathcal E_N}

\makeatletter
\AtBeginDocument{%
  \@ifpackageloaded{hyperref}{\hypersetup{hidelinks}}{}%
}
\makeatother

\begin{document}

\title{Prelocality No-Go Theorems and Exactly Solvable Relational History Hamiltonians}
\author{Maxim V. Churilov}
\email{churilovm1305@gmail.com}
\affiliation{Independent Researcher, Orenburg, Russia}
\date{October 8, 2026}

\begin{abstract}
Exact permutation covariance does not itself supply a sparse metric geometry.
For permutation-covariant, full-support $k$-body interaction orbits, the
incident interaction norm scales as
$\binom{N-1}{k-1}|J_N|\|h\|$; such orbits cannot have uniformly
finite range in metrics of diverging diameter.  A permutation-invariant
Hamiltonian diagonal in the graph basis has ground-state multiplicity at least
$N!/|\mathrm{Aut}(G)|$ whenever $G$ is a ground graph.  We give a
quantitative version for approximately symmetric energies, together with
fixed-radius connectivity and unitary-isometry obstructions.  The
trace-distance asymmetry of an arbitrary state also bounds its probability
of selecting one labeled graph.

We then construct two exactly solvable finite relational history Hamiltonians.
A cyclic clock with a reversible incidence update has a unique ground state;
gauge reduction yields a cycle Laplacian with a rank-one pin and the exact gap
$\Delta_{N,L}=E_* (1-\cos\theta_L)$, where
$\tan(L\theta_L/2)=1/\sin\theta_L$ and
$0<\theta_L<\pi/L$.  In particular,
$\Delta_{N,L}/E_*=\pi^2/(2L^2)-2\pi^2/L^3+O(L^{-4})$, independently of the
number of vertices.  In an alternative order-free architecture, one clock
qubit per unordered pair gives commuting four-local terms, $2^N$ ground
states and a size-independent gap $E_*(1-1/\sqrt2)$.  A fixed-weight
control penalty selects $\binom Nq$ ground states without reducing the gap.
These exact finite results expose clock and locality resource tradeoffs; they
do not establish an emergent geometric phase.
\end{abstract}

\maketitle

\section{Introduction}

A microscopic theory with no prescribed adjacency cannot use spatial locality
as an unexplained primitive.  If all microscopic labels are physically
indistinguishable, exact permutation covariance places all unordered pairs---and,
more generally, all $k$-subsets---in a single orbit.  An interaction supported
with equal strength on that orbit is naturally complete rather than sparse.
Rescaling its coefficient can control the norm, but rescaling does not create a
metric or a causal neighborhood.  Conversely, selecting a bounded-degree
support introduces either background data or a dynamical relation register.
This logical separation underlies quantum-graphity and related emergent-locality
programs, in which adjacency is promoted to a quantum degree of freedom and a
sparse phase is sought dynamically
\cite{Konopka2006,Konopka2008,Hamma2010,MarkopoulouSmolin2007}.  Explicit
model studies have also found large ground-state degeneracies in commonly used
graphity potentials \cite{SpectorSchwartz2018}; the orbit theorem below isolates
a symmetry-level mechanism that is independent of those particular potentials.

The first purpose of this paper is to make four prelocal obstructions precise.
The first is an orbit-growth theorem for covariant $k$-body interactions.  It
separates norm extensivity from metric locality: Kac-type scaling repairs the
former but not the latter.  The second is an orbit-degeneracy theorem: an
exactly permutation-invariant diagonal graph Hamiltonian cannot select one
nontrivial labeled sparse graph as a unique classical vacuum.  The third is a
local-information theorem for graph connectedness.  The multiset of all bounded-radius rooted neighborhoods does
not determine whether a finite graph is connected, so no energy or classifier
that factors only through that profile can be exact.  The fourth is a dynamical
isometry theorem.  Closed unitary evolution preserves every unitarily invariant
distance and therefore cannot contract an open family of states to a common
norm attractor.  These statements do not forbid emergent locality, statistical
connectedness, or equilibration of restricted observables.  They identify the
additional objects that must be constructed: a relational sector, a global
witness or asymptotic mechanism, and a weaker topology of dynamical approach.

The second purpose is constructive.  History-state Hamiltonians convert a
reversible update law into a stationary low-energy state
\cite{Feynman1985,KitaevShenVyalyi2002,Aharonov2007}.  Standard linear or cyclic
clocks lead to graph-Laplacian spectral problems; distributed clocks retain
more of the spacetime structure of a circuit
\cite{BreuckmannTerhal2014,BauschCrosson2018,CahaLandauNagaj2018}.  A companion
work treats arbitrary finite-dimensional unitary connections on a cycle, their
monodromy spectra, and operational clock changes \cite{ChurilovFiniteHistories2026}.
The present paper addresses a different problem: the update acts on dynamical
incidence registers and must coexist with exact microscopic permutation
covariance.  We analyze two complementary clocks.

The cyclic model has a single clock register and a unique relational history.
The earlier elementary estimate $\Delta=\Theta(L^{-2})$ can be replaced by a
complete solution.  The data penalty is a product-basis Hamming operator, and
the gauge-reduced Hamiltonian decomposes into rank-one perturbations of the
cycle Laplacian.  A scalar defect equation then gives the exact global gap and
its asymptotic expansion.  This also yields direct low-energy, thermal, and
perturbative certification bounds.

The Boolean-clock model addresses a different resource question.  Because all
pair updates commute, an order-free history is naturally indexed by subsets of
pair updates rather than by a chosen linear order.  One clock qubit per pair
realizes this subset space.  Every pair term is four-local in the tensor-factor
sense, the complete spectrum is obtained from a two-qubit calculation, and the
gap is constant.  The price is an exactly classified control-sector ground
manifold rather than an automatically selected unique permutation-invariant
state.  This constant-gap result does not contradict general criticality
theorems for universal circuit-to-Hamiltonian mappings
\cite{GonzalezCubitt2018}: the present update family is commuting, nonuniversal,
and retains an extensive family of control sectors.

The paper is organized as follows.  \Cref{sec:nogos} proves three classes of
prelocal obstruction, comprising four theorem-level no-go statements.  \Cref{sec:update} defines the reversible incidence algebra.
\Cref{sec:cyclic} solves the unique cyclic model, and
\Cref{sec:boolean} solves the order-free Boolean-clock model.
\Cref{sec:resources} compares their resources and claim boundaries.
Computational checks and adversarial tests are summarized in
\Cref{sec:verification}.

\section{Three prelocal obstructions}
\label{sec:nogos}

\subsection{Orbit growth versus metric locality}

Let $\mathfrak h$ be finite dimensional and
$\Hh_N=\mathfrak h^{\otimes N}$.  The symmetric group $S_N$ acts by the
canonical permutation representation $U_\pi$.  Fix $k\ge2$ and a
nonzero Hermitian $k$-site operator $h$ that is invariant under permutations of its
$k$ tensor factors and has \emph{minimal support equal to all $k$ sites}.
For every $k$-subset $X\subset[N]$, let $h_X$ denote its canonical
embedding on $X$ (therefore $\norm{h_X}=\norm h$), and set
\begin{equation}
 H_N^{(k)}=J_N\sum_{\substack{X\subset[N]\\ |X|=k}}h_X.
 \label{eq:kbody}
\end{equation}
The family is exactly permutation covariant and contains no preferred
$k$-subset.  Define the incident sum of interaction-term norms
\begin{equation}
 \Xi_N^{(k)}:=\max_{i\in[N]}
 \sum_{\substack{X\ni i\\ |X|=k}}\norm{J_Nh_X}.
 \label{eq:incident}
\end{equation}

\begin{nogotheorem}[Covariant orbit-growth and range obstruction]
\label{thm:orbit-trilemma}
For $N\ge k$, the family \eqref{eq:kbody} satisfies
\begin{equation}
 \Xi_N^{(k)}=\binom{N-1}{k-1}|J_N|\norm h.
 \label{eq:incident-exact}
\end{equation}
Consequently it cannot simultaneously possess:
\begin{enumerate}[label=(\roman*)]
\item exact $S_N$ covariance with no prescribed support hypergraph;
\item a volume-independent lower bound $|J_N|\ge J_0>0$;
\item a volume-uniform incident-norm bound;
\item finite interaction range in metrics $d_N$ whose diameters diverge.
\end{enumerate}
Kac scaling can enforce (iii) only by taking
$J_N=O(N^{1-k})$; for nonzero couplings it does not repair (iv).
The incident quantity in \eqref{eq:incident} is a sum of term norms,
not an asserted lower bound on the operator norm of the full Hamiltonian.
Retaining only $O(1)$ terms
incident on each site introduces either a background support hypergraph or a
dynamical support observable.
\end{nogotheorem}

\begin{proof}
The orbit of one $k$-subset under $S_N$ is the complete family
$\binom{[N]}k$.  Exactly $\binom{N-1}{k-1}$ members of that orbit contain a
fixed site, proving \eqref{eq:incident-exact}.  If $|J_N|\ge J_0$, the right
side grows as $N^{k-1}$.

Suppose instead that the interaction terms have metric range at most a
fixed $R$ and $J_N\ne0$.  By the minimal-support assumption,
each nonzero embedded term has support exactly $X$.  Given distinct
$i,j$ and $N\ge k$, some $k$-subset $X$ contains both.  Complete orbit support then implies $d_N(i,j)\le R$.  Hence
$\diam([N],d_N)\le R$, contradicting a diverging diameter.  Scaling $J_N$
does not alter support.  Finally, deleting almost all orbit terms requires
additional information specifying which subsets survive.  If fixed, it is
background structure; if represented by quantum degrees of freedom, locality
is sector dependent.
\end{proof}

Let \(\mathfrak G_N\) be the set of simple labeled graphs on \([N]\), with
orthonormal graph basis \(\{\ket G:G\in\mathfrak G_N\}\).  The permutation
representation acts by \(U_\pi\ket G=\ket{\pi G}\).

\begin{nogotheorem}[Symmetry-enforced graph-orbit degeneracy]
\label{thm:orbit-degeneracy}
Let
\begin{equation}
 H_{\mathrm{diag}}=\sum_{G\in\mathfrak G_N}E(G)\ketbra GG
 \label{eq:diag-graph-H}
\end{equation}
be diagonal in the graph basis and satisfy
\(U_\pi H_{\mathrm{diag}}U_\pi^\dagger=H_{\mathrm{diag}}\) for every
\(\pi\in S_N\).  If \(G\) is a ground graph, then every graph in its orbit is
a ground graph and
\begin{equation}
 \dim\Ker(H_{\mathrm{diag}}-E_0\id)
 \ge |S_N\!\cdot G|
 =\frac{N!}{|\Aut(G)|}.
 \label{eq:orbit-degeneracy}
\end{equation}
A unique graph-basis ground state is possible only when \(G\) is fixed by all
of \(S_N\); among simple graphs this leaves only the empty and complete graphs.
Consequently, a nonempty noncomplete sparse graph cannot be selected as a
unique classical vacuum by an exactly permutation-invariant diagonal energy.
\end{nogotheorem}

\begin{proof}
Permutation covariance and diagonality give \(E(\pi G)=E(G)\).  Hence the
entire orbit of a minimizing graph lies in the ground space.  Orbit--stabilizer
gives Eq.~\eqref{eq:orbit-degeneracy}.  If the orbit has one element, the edge
set is invariant under the transitive action of \(S_N\) on unordered pairs.
It is therefore either empty or the full pair set.
\end{proof}

\begin{corollary}[Robust orbit-gap obstruction]
\label{cor:robust-orbit-gap}
Let \(H_{\mathrm{diag}}\) be diagonal in the graph basis and define its
permutation-covariance defect
\begin{equation}
 \delta_{\mathrm{sym}}
 :=\max_{\pi\in S_N}
 \left\|U_\pi H_{\mathrm{diag}}U_\pi^\dagger-H_{\mathrm{diag}}\right\|.
 \label{eq:symmetry-defect}
\end{equation}
Suppose \(\ket G\) is a unique graph-basis ground state and let
\begin{equation}
 \gamma_G:=\min_{G'\ne G}\bigl[E(G')-E(G)\bigr]>0
 \label{eq:graph-gap}
\end{equation}
be its classical basis gap.  Then either \(G\) is empty or complete, or
\begin{equation}
 \gamma_G\le\delta_{\mathrm{sym}}.
 \label{eq:robust-orbit-gap}
\end{equation}
Thus a nontrivial labeled graph vacuum can be unique only at the price of a
symmetry defect at least as large as its protecting gap.
\end{corollary}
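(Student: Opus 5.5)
The plan is to argue by contraposition, working entirely within the graph basis. Assume \(G\) is neither empty nor complete and \(\ket G\) is the unique graph-basis ground state. By the last step in the proof of \cref{thm:orbit-degeneracy}, \(G\) is not fixed by all of \(S_N\). So there exist \(\pi\in S_N\) with \(\pi G\ne G\). Indeed, if \(\pi G=G\) for every \(\pi\), the edge set would be invariant under the transitive action on unordered pairs, forcing \(G\) to be empty or complete. Fix such a \(\pi\) and set \(G'=\pi G\ne G\). By the definition \eqref{eq:graph-gap} of the classical basis gap, \(E(G')-E(G)\ge\gamma_G\).

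Next I will compute the conjugated operator explicitly. Since \(U_\pi\ket{G''}=\ket{\pi G''}\), we get
\begin{equation*}
 U_\pi H_{\mathrm{diag}}U_\pi^\dagger=\sum_{G''}E(G'')\ketbra{\pi G''}{\pi G''}=\sum_{G''}E(\pi^{-1}G'')\ketbra{G''}{G''}.
\end{equation*}
This is again diagonal in the graph basis. The difference \(U_\pi H_{\mathrm{diag}}U_\pi^\dagger-H_{\mathrm{diag}}\) is therefore diagonal with entries \(E(\pi^{-1}G'')-E(G'')\). Its operator norm equals the maximum absolute entry.

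I will then evaluate this difference at the single basis state \(G''=G'=\pi G\). The diagonal entry there is \(E(G)-E(\pi G)\le-\gamma_G\). Hence
\begin{equation*}
 \delta_{\mathrm{sym}}\ge\left\|U_\pi H_{\mathrm{diag}}U_\pi^\dagger-H_{\mathrm{diag}}\right\|\ge|E(\pi G)-E(G)|\ge\gamma_G,
\end{equation*}
which is \eqref{eq:robust-orbit-gap}. Alternatively, one can test the same operator on the vector \(\ket{\pi G}\) without using diagonality of the difference. The expectation value \(\bra{\pi G}(U_\pi H U_\pi^\dagger-H)\ket{\pi G}=E(G)-E(\pi G)\) already gives the bound, since the norm dominates any diagonal matrix element.

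No step here is a genuine obstacle. The only point requiring care is bookkeeping of the action direction (\(\pi\) versus \(\pi^{-1}\)) in the conjugation formula. The expectation-value version avoids this entirely, so that is the route I will write. The remark after the statement is a restatement: uniqueness with gap \(\gamma_G\) forces \(\delta_{\mathrm{sym}}\ge\gamma_G\).
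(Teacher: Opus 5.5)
Your proposal is correct and is essentially the paper's argument: bound $|E(\pi G)-E(G)|$ by a single graph-basis matrix element of the conjugation defect, then combine a permutation with $\pi G\ne G$ (available unless $G$ is empty or complete, by \Cref{thm:orbit-degeneracy}) with the uniqueness gap $\gamma_G$. The only difference is cosmetic: you evaluate $U_\pi H_{\mathrm{diag}}U_\pi^\dagger-H_{\mathrm{diag}}$ on $\ket{\pi G}$, matching the orientation in \eqref{eq:symmetry-defect} exactly, while the paper evaluates $U_\pi^\dagger H_{\mathrm{diag}}U_\pi-H_{\mathrm{diag}}$ on $\ket G$; the two are equivalent because the maximum runs over all of $S_N$.
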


\begin{proof}
For every permutation \(\pi\),
\begin{align}
 |E(\pi G)-E(G)|
 &=\left|\bra G
 \bigl(U_\pi^\dagger H_{\mathrm{diag}}U_\pi-H_{\mathrm{diag}}\bigr)
 \ket G\right|\\
 &\le\delta_{\mathrm{sym}}.
\end{align}
If the orbit of \(G\) is nontrivial, choose \(\pi G\ne G\).  Uniqueness gives
\(E(\pi G)-E(G)\ge\gamma_G\), proving
Eq.~\eqref{eq:robust-orbit-gap}.  A singleton orbit is empty or complete by
\Cref{thm:orbit-degeneracy}.
\end{proof}

\begin{remark}[Coherent selection is a different mechanism]
The theorem concerns a unique \emph{graph-basis} vacuum of a diagonal
Hamiltonian.  An off-diagonal, permutation-invariant operator may select a
unique orbit superposition.  Such a state is not one labeled classical graph;
it requires coherent orbit mixing and leaves classical geometry to a further
measurement, decoherence, or symmetry-breaking analysis.
\end{remark}

\begin{corollary}[Necessary staged architecture]
\label{cor:staged}
Under the hypotheses of \Cref{thm:orbit-trilemma,thm:orbit-degeneracy}, a
permutation-covariant model with no hidden geometry must distinguish at least
three logically separate ingredients before claiming a physical finite-range
propagation law: a prelocal microscopic operator, selection of a sparse relational
sector, and a sectorwise quasillocality estimate such as a Lieb--Robinson bound
\cite{LiebRobinson1972}.
\end{corollary}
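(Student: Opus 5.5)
The corollary is essentially a logical consequence of the two preceding no-go theorems, so the plan is to argue by elimination. Consider a candidate model that claims a physical finite-range propagation law on $[N]$ with metrics $d_N$ of diverging diameter. The aim is to show that each of the three ingredients is forced and cannot be absorbed into another.

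\emph{Step 1 (prelocal microscopic operator).} Invoke \Cref{thm:orbit-trilemma}. If the microscopic Hamiltonian is built from full-support $k$-body terms with exact $S_N$ covariance and no prescribed support hypergraph, its terms range over the whole orbit $\binom{[N]}{k}$. By the range argument in that proof, every pair $i,j$ then lies in a common support, so $\diam([N],d_N)\le R$ for any finite range $R$. Hence the microscopic operator is necessarily \emph{prelocal}: complete on its orbit and not finite-range in any diverging-diameter metric. This holds whatever Kac rescaling $J_N$ is chosen, because rescaling does not change supports. Finite range is therefore not available at the microscopic level and must be obtained afterwards.

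\emph{Step 2 (sector selection).} By the last clause of \Cref{thm:orbit-trilemma}, obtaining $O(1)$ incident terms per site requires information singling out which subsets survive. If that information is fixed, it is hidden geometry, which is excluded by hypothesis. So it must be a dynamical support observable, for instance the graph register with basis $\ket G$. One then needs a separate selection step that places the state in a sparse relational sector. \Cref{thm:orbit-degeneracy} shows this step is nontrivial and not automatic. A diagonal, exactly invariant energy cannot single out a sparse labeled graph: it yields at least $N!/|\Aut(G)|$ degenerate ground graphs. \Cref{cor:robust-orbit-gap} shows the same obstruction holds robustly under approximate symmetry. Selection is therefore a logically distinct ingredient, whether implemented by coherent orbit superposition, symmetry breaking, or restriction to a sector.

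\emph{Step 3 (sectorwise quasilocality).} Within a selected sector, the graph supplies a metric, but the full Hamiltonian is still prelocal by Step 1. A finite-range propagation claim therefore requires an estimate, such as a Lieb--Robinson bound \cite{LiebRobinson1972}, stated relative to the sector's graph metric. Such an estimate does not follow from the existence of the sector, since the operator still couples different sectors and orbits.

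The main obstacle is that ``logically separate'' is informal. I would make it precise by exhibiting, for each ingredient, a situation in which the other two hold but that one fails:
\begin{itemize}
\item an $S_N$-covariant $H$ with a sparse sector but unbounded propagation;
\item sector selection with no covariant microscopic law;
\item a Lieb--Robinson bound in a fixed background graph, which violates the no-hidden-geometry hypothesis.
\end{itemize}
I expect that making these witnesses precise, rather than the deductions in Steps 1--3, will be the delicate part.
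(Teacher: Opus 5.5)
Your Steps 1--3 are essentially the paper's own reasoning. The paper gives no separate argument: it reads the corollary off the closing clause of \Cref{thm:orbit-trilemma} (surviving supports are either fixed background structure or a dynamical observable, which makes locality sector dependent) together with \Cref{thm:orbit-degeneracy}, exactly as you do. Your independence witnesses go beyond what the paper attempts and are not needed for this interpretive corollary. Your second and third witnesses also leave the covariant, background-free hypothesis class, and the third breaks prelocality as well as selection. So they would not show separateness within that class, whereas \Cref{thm:orbit-degeneracy} and your first witness already supply the two separations that matter.
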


\begin{remark}
The theorem concerns complete interaction orbits.  It does not assert that
permutation symmetry forbids locality after spontaneous or relational sector
selection.  It forbids treating the selected metric support as already present
at the same logical level as the exact microscopic symmetry.
\end{remark}

\subsection{Bounded neighborhoods do not determine connectedness}

For an integer $r\ge0$, a finite graph $G$, and $v\in V(G)$, let $B_r(G,v)$ denote the rooted
radius-$r$ ball.  Define the rooted-neighborhood profile
\begin{equation}
 \mathsf P_r(G):=\bigl\{[B_r(G,v)]:v\in V(G)\bigr\}_{\mathrm{multi}},
 \label{eq:profile}
\end{equation}
where brackets denote rooted isomorphism type.  This profile contains strictly
more information than one additive local density: it determines every finite
collection of rooted radius-$r$ counts and every functional of their empirical
distribution, the basic data of local graph convergence
\cite{Lovasz2012}.

\begin{nogotheorem}[Local-information obstruction to connectedness]
\label{thm:connectivity}
Fix $r<\infty$.  No functional of $(|V(G)|,\mathsf P_r(G))$ can vanish on all
connected finite graphs and be strictly positive on all disconnected finite
graphs.  In particular, neither a fixed-radius additive energy nor an arbitrary
postprocessing of all radius-$r$ neighborhood counts can detect connectedness
exactly.
\end{nogotheorem}

\begin{proof}
Choose $n\ge\max\{3,2r+2\}$.  The connected cycle $C_{2n}$ and the disconnected graph
$C_n\sqcup C_n$ have the same number $2n$ of vertices.  Every rooted
radius-$r$ ball in either graph is the path on $2r+1$ vertices rooted at its
center.  Their profiles \eqref{eq:profile} are therefore identical, although
one graph is connected and the other is not.
\end{proof}

\begin{corollary}[Diagonal quantum penalties]
Let $H_r$ be diagonal in a graph basis and suppose its eigenvalue on $\ket G$
factors through $(|V(G)|,\mathsf P_r(G))$.  Then $H_r$ cannot have zero energy
on every connected graph and positive energy on every disconnected graph.
\end{corollary}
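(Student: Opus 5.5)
The plan is to reduce the corollary directly to \Cref{thm:connectivity} by reading the diagonal eigenvalue as a functional. By hypothesis, the eigenvalue of $H_r$ on $\ket G$ factors through the pair $(|V(G)|,\mathsf P_r(G))$. That is, there is a function $F$ with $H_r\ket G=F(|V(G)|,\mathsf P_r(G))\ket G$ for every graph $G$ in the basis. The property we want to rule out, zero energy on every connected graph and positive energy on every disconnected one, says exactly that $F$ vanishes on connected graphs and is strictly positive on disconnected ones. \Cref{thm:connectivity} forbids precisely such a functional, so the corollary follows by contradiction.

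To keep the argument self-contained and concrete, I would also exhibit the witnessing pair explicitly, reusing the one from the proof of \Cref{thm:connectivity}. Take $n\ge\max\{3,2r+2\}$, and let $G_1=C_{2n}$ and $G_2=C_n\sqcup C_n$. Both have $2n$ vertices. Their radius-$r$ profiles agree, because every rooted ball in either graph is the path on $2r+1$ vertices rooted at its center. Therefore
\begin{equation}
 \bra{G_1}H_r\ket{G_1}=F(2n,\mathsf P_r(G_1))=F(2n,\mathsf P_r(G_2))=\bra{G_2}H_r\ket{G_2}.
\end{equation}
This common value would have to be both zero (since $G_1$ is connected) and strictly positive (since $G_2$ is not), which is impossible.

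The one point that needs care is that both graphs must lie in the graph basis on which $H_r$ acts. If the basis is $\mathfrak G_N$ with $N$ fixed, I need $N=2n$ for some admissible $n$, so in addition $N$ must be even and $N\ge\max\{6,4r+4\}$. I would state the corollary as referring to the class of all finite graphs, as in the theorem. For a fixed odd $N$, I would instead attach an isolated vertex to both graphs. Since $r\ge0$, the isolated vertex contributes the same single-vertex rooted ball to each profile, but then both graphs are disconnected. So I would rather use $C_{N}$ versus $C_a\sqcup C_b$ with $a,b\ge 2r+2$, $a+b=N$, both of which have all radius-$r$ balls equal to the rooted path. Beyond this bookkeeping there is no real obstacle: the content is already in \Cref{thm:connectivity}.
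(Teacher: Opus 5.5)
Your proposal is correct and follows the paper's route: the corollary is \Cref{thm:connectivity} applied to the diagonal eigenvalue viewed as a functional $F(|V(G)|,\mathsf P_r(G))$, with the same witnessing pair $C_{2n}$ and $C_n\sqcup C_n$. Your fixed-$N$ bookkeeping is a sound refinement that the paper leaves implicit; the statement must be read over all finite graphs, or for fixed $N$ large relative to $r$, because for $r\ge N-1$ every ball is a whole component and the profile does detect connectedness. One small fix: require $a,b\ge\max\{3,2r+2\}$, as the paper does for $n$, so that $r=0$ does not produce a degenerate $C_2$.
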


Exact finite-volume connectedness therefore requires at least one genuinely
global object: a spectral or Matrix--Tree penalty, an auxiliary spanning
witness, or a nonlocal constraint.  A weaker alternative is asymptotic
free-energy suppression, but then the theorem to prove is statistical rather
than exact.

\subsection{Closed unitary evolution has no common norm attractor}

A matrix norm $\lVert\!\lVert\cdot\rVert\!\rVert$ is unitarily invariant if
$\lVert\!\lVert UAV\rVert\!\rVert=\lVert\!\lVert A\rVert\!\rVert$ for all
unitaries $U,V$.  Schatten norms are the standard examples.

\begin{nogotheorem}[Unitary isometry obstruction]
\label{thm:unitary-attractor}
Let $\{U_s\}_{s\ge0}$ be any family of unitaries on a finite-dimensional
Hilbert space.  For every pair of states $\rho,\sigma$ and every unitarily
invariant norm,
\begin{equation}
 \lVert\!\lVert U_s\rho U_s^\dagger-U_s\sigma U_s^\dagger
 \rVert\!\rVert
 =\lVert\!\lVert\rho-\sigma\rVert\!\rVert.
 \label{eq:isometry}
\end{equation}
Hence two distinct initial states cannot converge under the same unitary
conjugation to one common norm limit.  In particular, no nonempty open set has
a point attractor in trace norm.
\end{nogotheorem}

\begin{proof}
Equation \eqref{eq:isometry} is unitary invariance.  If both evolved states
converged to the same $\rho_*$, the triangle inequality would force their
constant positive distance to converge to zero.
\end{proof}

\begin{corollary}[No attraction into an invariant sector]
If $P$ is a projection satisfying $[P,U_s]=0$ for all $s$, then
$\Tr[(\id-P)U_s\rho U_s^\dagger]=\Tr[(\id-P)\rho]$.  Nonzero weight outside
$\Ran P$ cannot disappear.
\end{corollary}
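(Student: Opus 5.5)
The plan is to establish the trace identity directly from the commutation hypothesis, and then read off the persistence of weight outside \(\Ran P\) as an immediate consequence. I will use only cyclicity of the trace and unitarity of \(U_s\); \Cref{thm:unitary-attractor} is not needed, although the statement is in the same spirit, since weight conservation is another invariant that unitary conjugation cannot shrink.

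First, from \([P,U_s]=0\) and \(\id\) commuting with everything, I obtain \([(\id-P),U_s]=0\). Taking adjoints and using \(P=P^\dagger\) gives \([(\id-P),U_s^\dagger]=0\) as well. I then compute
\begin{equation*}
\Tr\bigl[(\id-P)U_s\rho U_s^\dagger\bigr]
=\Tr\bigl[U_s^\dagger(\id-P)U_s\rho\bigr]
=\Tr\bigl[(\id-P)U_s^\dagger U_s\rho\bigr]
=\Tr\bigl[(\id-P)\rho\bigr].
\end{equation*}
The first equality is cyclicity of the trace, which is valid in finite dimension. The second moves \(\id-P\) past \(U_s^\dagger\) using the commutation relation. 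The third uses unitarity, \(U_s^\dagger U_s=\id\). This identity holds for every \(s\) individually, so nothing is assumed about a group law or continuity in \(s\).

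For the final sentence, note that \(\Tr[(\id-P)\rho]\) is exactly the probability weight of \(\rho\) outside \(\Ran P\). The identity shows that this weight is constant along the whole family \(\{U_s\rho U_s^\dagger\}\). In particular, it cannot tend to zero if it is initially nonzero, so the evolved state is never attracted into \(\Ran P\), even asymptotically.

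There is no real obstacle in this argument. The only point needing care is to derive the commutation of \(\id-P\) with \(U_s^\dagger\) correctly; equivalently, one can write \(U_s^\dagger=U_s^{-1}\) and conjugate the relation \(PU_s=U_sP\) by \(U_s^{-1}\) on both sides.
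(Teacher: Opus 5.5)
Your argument is correct and is the immediate computation the paper has in mind: the paper gives no separate proof, treating the identity as a direct consequence of $[\id-P,U_s]=0$, cyclicity of the trace, and $U_s^\dagger U_s=\id$. Your observation that conjugating $PU_s=U_sP$ by $U_s^{-1}$ yields $[P,U_s^\dagger]=0$ without needing $P=P^\dagger$ is a nice touch. Equivalently, you could write $(\id-P)U_s=U_s(\id-P)$ and cycle $U_s^\dagger$ around to meet $U_s$.
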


Reachability in a closed theory must therefore be formulated through a weaker
object: convergence of a fixed local algebra, temporal averages, typicality
inside conserved shells, decoherence after restriction, or conditioning on an
internal record.

\section{Reversible incidence algebra}
\label{sec:update}

Let $N\ge2$ and let
\begin{equation}
 \EL=\bigl\{\{i,j\}:1\le i<j\le N\bigr\},
 \qquad M:=|\EL|=\binom N2.
\end{equation}
Assign a control qubit $\mathcal C_i\simeq\C^2$ to each vertex and a relation
qubit $\mathcal R_e\simeq\C^2$ to each potential pair.  The data space is
\begin{equation}
 \D_N=\bigotimes_{i=1}^N\mathcal C_i\otimes
      \bigotimes_{e\in\EL}\mathcal R_e.
 \label{eq:data-space}
\end{equation}
Vertex labels are tensor coordinates; physical operators below are invariant
under their simultaneous permutation.

For $e=\{i,j\}$, set
\begin{equation}
 Q_e=\id-\ketbra{00}{00}_{\mathcal C_i\mathcal C_j},
 \qquad
 W_e=(\id-Q_e)\otimes\id+Q_e\otimes X_e,
 \label{eq:local-update}
\end{equation}
where $X_e$ flips $\mathcal R_e$.  Thus the relation bit is toggled if at least
one endpoint is active.

\begin{lemma}[Commuting involutive updates]
\label{lem:update-algebra}
For all $e,f\in\EL$,
\begin{equation}
 W_e=W_e^\dagger,
 \qquad W_e^2=\id,
 \qquad [W_e,W_f]=0.
 \label{eq:update-algebra}
\end{equation}
Consequently
\begin{equation}
 W_N:=\prod_{e\in\EL}W_e
 \label{eq:global-update}
\end{equation}
is an order-independent Hermitian involution.
\end{lemma}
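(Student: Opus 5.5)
The plan is to treat each $W_e$ as a controlled operation whose control register is diagonal in the computational basis of the vertex qubits. Two structural facts carry everything. First, $Q_e$ is a Hermitian projection on $\mathcal C_i\otimes\mathcal C_j$, namely the complement of $\ketbra{00}{00}$. Second, $X_e$ is a Hermitian involution acting only on $\mathcal R_e$. Each identity in \eqref{eq:update-algebra} then follows from elementary projection algebra, and no spectral input is required.

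Hermiticity and involutivity are immediate. Since $Q_e^\dagger=Q_e$ and $X_e^\dagger=X_e$, and these act on disjoint tensor factors, $W_e^\dagger=W_e$. For the square, we expand
\[
W_e^2=(\id-Q_e)^2\otimes\id+Q_e^2\otimes X_e^2+(\id-Q_e)Q_e\otimes X_e+Q_e(\id-Q_e)\otimes X_e .
\]
Using $Q_e^2=Q_e$ and $X_e^2=\id$, and noting that the cross terms vanish because $(\id-Q_e)Q_e=0$, this reduces to $W_e^2=\id$.

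For commutativity, a cleaner route is to write $W_e=\sum_{c}\ketbra cc_{\mathcal C}\otimes X_e^{q_e(c)}$. Here $c\in\{0,1\}^N$ labels computational control configurations, and $q_e(c)=1$ exactly when $c_i\vee c_j=1$. Every $Q_e$ is diagonal in the same control basis, so within each control sector $c$ both $W_e$ and $W_f$ act as products of Pauli $X$'s on relation qubits. If $e\ne f$, these act on distinct factors $\mathcal R_e\ne\mathcal R_f$ and commute; if $e=f$ the statement is trivial. Note that $e,f$ may share a vertex, so the control supports overlap, but this is harmless because the control parts are commuting diagonal projections. The only step needing care is exactly this overlapping-vertex case, and the block-diagonal decomposition handles it.

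The consequences follow directly. Pairwise commutativity makes the product \eqref{eq:global-update} independent of ordering. A product of commuting Hermitian operators is Hermitian, since $W_N^\dagger$ is the reversed product, which equals $W_N$. Likewise $W_N^2=\prod_e W_e^2=\id$ after reordering. In the control sector $c$ one can even write explicitly $W_N=\bigotimes_{e}X_e^{q_e(c)}$, which gives all three properties at once.
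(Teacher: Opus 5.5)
Your proof is correct and follows the same route as the paper. The paper uses orthogonality of the control supports $\id-Q_e$ and $Q_e$ for $W_e^2=\id$ and Hermiticity, and common control diagonality plus disjoint relation registers for commutation; your block decomposition $W_e=\sum_c\ketbra{c}{c}\otimes X_e^{q_e(c)}$ is a more explicit version of that argument and handles the shared-vertex case cleanly.
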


\begin{proof}
All $Q_e$ are diagonal in the common computational basis of the controls.
The relation flips act on distinct pair registers.  The two summands in
\eqref{eq:local-update} have orthogonal control support, so $W_e^2=\id$ and
$W_e=W_e^\dagger$.  Common control diagonality and disjoint relation support
give pairwise commutation.
\end{proof}

The same local update algebra supports two inequivalent histories.  A cyclic
clock applies the complete involution $W_N$ at each step and permits a unique
input-selected history.  A Boolean clock records independently which commuting
pair updates have occurred and avoids choosing an order on $\EL$.

\section{Unique cyclic history: complete spectrum and exact gap}
\label{sec:cyclic}

\subsection{Hamiltonian and gauge reduction}

Let $L\ge4$ be even and
$\T_L=\operatorname{span}\{\ket t:t\in\Z_L\}$.  For each clock edge define
\begin{equation}
 A_t\psi=\ket t\otimes\psi-\ket{t+1}\otimes W_N\psi,
 \qquad
 P_t=\frac12A_tA_t^\dagger.
 \label{eq:history-projector}
\end{equation}
Because $A_t^\dagger A_t=2\id$, every $P_t$ is an orthogonal projector.  The
zero condition for $\Psi=\sum_t\ket t\otimes\psi_t$ is
$\psi_{t+1}=W_N\psi_t$.

Choose
\begin{equation}
 \ket{\Omega_N}=\ket+_{\mathcal C}^{\otimes N}
 \otimes\ket0_{\mathcal R}^{\otimes M}
 \label{eq:omega}
\end{equation}
and the product-basis penalty
\begin{equation}
 K_N=\sum_{i=1}^N(\id-\ketbra{+}{+}_i)
 +\sum_{e\in\EL}\ketbra{1}{1}_e.
 \label{eq:K}
\end{equation}
Its kernel is $\C\ket{\Omega_N}$.

\begin{construction}[Cyclic incidence-history Hamiltonian]
\label{con:cyclic}
For one scale $E_*>0$, define
\begin{equation}
 \boxed{\Hcyc=E_*\left[
 \sum_{t\in\Z_L}P_t+\ketbra{0}{0}_{\T_L}\otimes K_N
 \right].}
 \label{eq:Hcyc}
\end{equation}
\end{construction}

The operator is bounded, self-adjoint, nonnegative, and exactly
permutation covariant.  Since $W_N^2=\id$ and $L$ is even, the history closes.
Define the block-diagonal gauge
\begin{equation}
 G_{N,L}=\sum_{t\in\Z_L}\ketbra{t}{t}\otimes W_N^t
 \label{eq:gauge}
\end{equation}
and the normalized cycle Laplacian
\begin{equation}
 L_L=\frac12\sum_{t\in\Z_L}
 (\ket t-\ket{t+1})(\bra t-\bra{t+1}).
 \label{eq:cycle-lap}
\end{equation}

\begin{theorem}[Exact gauge reduction and unique ground state]
\label{thm:gauge-ground}
The cyclic Hamiltonian obeys
\begin{equation}
 G_{N,L}^\dagger\Hcyc G_{N,L}
 =E_*\left[L_L\otimes\id+\ketbra{0}{0}\otimes K_N\right].
 \label{eq:gauge-reduction}
\end{equation}
It has the unique normalized zero-energy state
\begin{equation}
 \ket{\Psi_{N,L}}=\frac1{\sqrt L}
 \sum_{t=0}^{L-1}\ket t\otimes W_N^t\ket{\Omega_N}.
 \label{eq:cyclic-ground}
\end{equation}
\end{theorem}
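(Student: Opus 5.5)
The plan is to verify the gauge identity \eqref{eq:gauge-reduction} term by term, and then read off the kernel of the reduced operator.

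\textbf{Gauge well-definedness and unitarity.} First I check that $G_{N,L}$ in \eqref{eq:gauge} is a well-defined unitary on $\T_L\otimes\D_N$. Since $W_N^2=\id$ by \Cref{lem:update-algebra}, $W_N^t$ depends only on the parity of $t$. Because $L$ is even, this parity is well defined for $t\in\Z_L$. Each block $W_N^t$ is unitary, so $G_{N,L}$ is unitary.

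\textbf{The history projectors.} Next I conjugate each $P_t$. The wraparound edge $t=L-1$ is the only place where even $L$ is genuinely used. For the edge $t\to t+1$, compute
\begin{equation}
G_{N,L}^\dagger A_t W_N^t
=\ket t\otimes W_N^{-t}W_N^t-\ket{t+1}\otimes W_N^{-(t+1)}W_NW_N^t .
\end{equation}
This equals $(\ket t-\ket{t+1})\otimes\id$. At the wraparound edge, $W_N^{-L}=\id$ keeps the same identity valid. Writing $A_t=A_tW_N^tW_N^{-t}$ with $W_N^{\pm t}$ acting on $\D_N$, I obtain
\begin{equation}
G^\dagger A_tA_t^\dagger G=(G^\dagger A_tW_N^t)(G^\dagger A_tW_N^t)^\dagger .
\end{equation}
Hence $G^\dagger P_tG=\tfrac12(\ket t-\ket{t+1})(\bra t-\bra{t+1})\otimes\id$, and summing over $t$ gives $L_L\otimes\id$.

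\textbf{The pinned penalty.} The penalty term is supported on clock slice $0$, where the gauge acts as $W_N^0=\id$. It is therefore invariant under conjugation, which completes \eqref{eq:gauge-reduction}.

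\textbf{Kernel of the reduced operator.} Both summands of the reduced operator are nonnegative, so its kernel is the intersection of their kernels. The cycle $\Z_L$ is connected, so $\Ker L_L$ is spanned by the uniform vector $\ket u=L^{-1/2}\sum_t\ket t$. Therefore $\Ker(L_L\otimes\id)=\ket u\otimes\D_N$. For $\ket u\otimes\phi$, the penalty expectation is $\tfrac1L\bra\phi K_N\ket\phi$. This vanishes iff $\phi\in\Ker K_N=\C\ket{\Omega_N}$; the kernel statement is immediate because $K_N$ is a sum of commuting product-basis projectors whose only common zero is $\ket{\Omega_N}$. The reduced kernel is thus exactly $\C\,\ket u\otimes\ket{\Omega_N}$.

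\textbf{Back to the original frame.} Applying $G_{N,L}$ gives \eqref{eq:cyclic-ground} as the unique normalized zero-energy state, up to phase. It is the ground state because $\Hcyc\ge0$.

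The only step needing care is the cyclic closure: the identity $W_N^{L}=\id$ makes the gauge consistent around the cycle. Without it the reduced operator would carry a nontrivial monodromy rather than the plain Laplacian $L_L$.
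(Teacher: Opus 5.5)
Your proposal is correct and follows essentially the same route as the paper. Both arguments use fiberwise cancellation of the twisted differences via $W_N^{t+1}=W_NW_N^t$, invariance of the pin at $t=0$, and identification of the kernel as clock-constant vectors with data in $\Ker K_N=\C\ket{\Omega_N}$. Your version spells out bookkeeping the paper leaves implicit: that the gauge is well defined on $\Z_L$ because $W_N^2=\id$ and $L$ is even, the wraparound edge, and the factorization $G_{N,L}^\dagger A_tA_t^\dagger G_{N,L}=(G_{N,L}^\dagger A_tW_N^t)(G_{N,L}^\dagger A_tW_N^t)^\dagger$.
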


\begin{proof}
On the $t$th clock fiber, $G_{N,L}$ multiplies the data vector by $W_N^t$.
The update across $t\to t+1$ cancels because
$W_N^{t+1}=W_NW_N^t$, turning every twisted difference into the ordinary
cycle difference.  The pin lies at $t=0$ and is unchanged, giving
\eqref{eq:gauge-reduction}.

A zero-energy vector is constant in clock after the gauge transformation and
its data vector lies in $\Ker K_N=\C\Omega_N$.  Undoing the gauge gives
\eqref{eq:cyclic-ground}.  Therefore the kernel is one dimensional.
\end{proof}

\subsection{Complete block spectrum}

Set
\begin{equation}
 R:=N+M=\frac{N(N+1)}2.
 \label{eq:R}
\end{equation}
The summands of $K_N$ are independent one-qubit projectors in a product basis.
Hence $K_N$ has eigenvalue $m\in\{0,\ldots,R\}$ with multiplicity
$\binom Rm$.  Define the clock defect operator
\begin{equation}
 A_{L,m}:=L_L+m\ketbra{0}{0}.
 \label{eq:ALm}
\end{equation}

\begin{theorem}[Complete spectral reduction]
\label{thm:complete-spectrum}
The spectrum of the cyclic Hamiltonian, including multiplicity, is
\begin{equation}
 \spec(\Hcyc/E_*)=
 \biguplus_{m=0}^{R}\binom Rm\,\spec(A_{L,m}).
 \label{eq:full-spectrum}
\end{equation}
Let
\begin{equation}
 \lambda_k=1-\cos\frac{2\pi k}{L},
 \qquad k=0,\ldots,L-1.
 \label{eq:lambda-k}
\end{equation}
For $z\notin\{\lambda_k\}$, the characteristic polynomial of $A_{L,m}$ is
\begin{equation}
 \det(A_{L,m}-z\id)=
 \prod_{k=0}^{L-1}(\lambda_k-z)
 \left[1+\frac mL\sum_{k=0}^{L-1}\frac1{\lambda_k-z}\right],
 \label{eq:secular}
\end{equation}
with the right side understood by polynomial continuation at the apparent
poles.  Equation \eqref{eq:secular} determines every eigenvalue of every data
sector.
\end{theorem}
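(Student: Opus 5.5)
Starting from the gauge reduction of \Cref{thm:gauge-ground}, the problem is unitarily equivalent to $E_*[L_L\otimes\id+\ketbra00\otimes K_N]$, and $G_{N,L}$ is unitary because each $W_N^t$ is. Since $K_N$ is a sum of commuting one-qubit projectors $\id-\ketbra{+}{+}_i$ and $\ketbra11_e$, it is diagonal in the product basis obtained from $\{\ket+,\ket-\}$ on the controls and $\{\ket0,\ket1\}$ on the relations. Its eigenvalue on a basis vector equals the number $m$ of excited factors among the $R=N+M$ sites. Therefore $K_N=\sum_m m\,\Pi_m$ with $\operatorname{rank}\Pi_m=\binom Rm$. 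The reduced operator commutes with $\id\otimes\Pi_m$ and acts on $\T_L\otimes\Ran\Pi_m\cong\T_L\otimes\C^{\binom Rm}$ as $A_{L,m}\otimes\id$. Taking the direct sum over $m$ gives \eqref{eq:full-spectrum}, with multiplicities counted.

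For the secular equation, I would diagonalize $L_L$ by the Fourier basis $\ket{\hat k}=L^{-1/2}\sum_t e^{2\pi ikt/L}\ket t$. Expanding $L_L$ gives $\id-\frac12(S+S^\dagger)$ with $S$ the cyclic shift, so $\ket{\hat k}$ has eigenvalue $1-\cos(2\pi k/L)=\lambda_k$. The pin $m\ketbra00$ is a rank-one perturbation $m\,vv^\dagger$ with $v=\ket0$, and $|\langle\hat k|0\rangle|^2=1/L$ for every $k$. For $z\notin\{\lambda_k\}$, the matrix determinant lemma gives
\begin{equation*}
\det(L_L+m vv^\dagger-z)=\det(L_L-z)\bigl[1+m\,v^\dagger(L_L-z)^{-1}v\bigr],
\end{equation*}
and the resolvent matrix element equals $\frac1L\sum_k(\lambda_k-z)^{-1}$. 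This yields \eqref{eq:secular}.

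Finally, both sides are polynomials in $z$ once the bracket is multiplied through, because each pole $(\lambda_k-z)^{-1}$ is cancelled by a factor of the product. They agree on the cofinite set $z\notin\{\lambda_k\}$, so they agree identically. This continuation is the only point needing care. The degenerate values $\lambda_k=\lambda_{L-k}$ and the simple values at $k=0$ and $k=L/2$ may give eigenvalues of $A_{L,m}$ that coincide with unperturbed ones: for instance, odd Fourier combinations vanish at $t=0$ and are unaffected by the pin. These are exactly the roots produced by the continued product, where a repeated factor $(\lambda_k-z)^2$ loses only one power. Since a polynomial identity determines all roots with multiplicity, every eigenvalue of every sector is accounted for, which proves the last claim.
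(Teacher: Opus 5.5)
Your proposal is correct and follows the paper's proof essentially step for step: gauge reduction, product-basis diagonalization of $K_N$ into weight-$m$ sectors with multiplicity $\binom Rm$, Fourier diagonalization of $L_L$, and the rank-one matrix determinant lemma. Your treatment of the polynomial continuation is more explicit than the paper's, but your side remark on the simple values is slightly off. For $m\ge1$, the values $\lambda_0=0$ and $\lambda_{L/2}=2$ are never eigenvalues of $A_{L,m}$, because at $z=\lambda_k$ for those $k$ the continued polynomial reduces to $\frac mL\prod_{j\ne k}(\lambda_j-z)\neq0$; only the doubly degenerate $\lambda_k$ survive, through the odd modes.
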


\begin{proof}
Diagonalize $K_N$ in its product eigenbasis and use
\eqref{eq:gauge-reduction}; each data eigenvector of weight $m$ carries the
clock operator $A_{L,m}$, with multiplicity $\binom Rm$.  The Fourier basis
diagonalizes $L_L$ with eigenvalues \eqref{eq:lambda-k}, and
$|\langle0|k\rangle|^2=1/L$.  The matrix determinant lemma gives
\eqref{eq:secular}.
\end{proof}

The reduction is stronger than a gap estimate: all $N$-dependence is confined
to binomial multiplicities.  The numerical values of the clock eigenvalues
depend only on $L$ and the integer pin strength $m$.

\subsection{Exact global gap}

\begin{theorem}[Exact $N$-independent gap]
\label{thm:exact-gap}
Let $L\ge4$ be even.  There is a unique
$\theta_L\in(0,\pi/L)$ satisfying
\begin{equation}
 \tan\frac{L\theta_L}{2}=\frac1{\sin\theta_L}.
 \label{eq:theta-equation}
\end{equation}
The first positive eigenvalue of \eqref{eq:Hcyc} is exactly
\begin{equation}
 \boxed{\Delta_{N,L}=E_*\bigl(1-\cos\theta_L\bigr),}
 \label{eq:exact-gap}
\end{equation}
for every $N\ge2$.  It obeys the sharp elementary bracket
\begin{equation}
 E_*\left(1-\cos\frac\pi{L+2}\right)
 <\Delta_{N,L}<
 E_*\left(1-\cos\frac\pi L\right)
 \label{eq:gap-bracket}
\end{equation}
and the expansion
\begin{equation}
 \Delta_{N,L}=E_*\left[
 \frac{\pi^2}{2L^2}-\frac{2\pi^2}{L^3}+O(L^{-4})
 \right].
 \label{eq:gap-asymptotic}
\end{equation}
\end{theorem}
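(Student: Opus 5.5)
By \Cref{thm:complete-spectrum}, the spectrum of $\Hcyc/E_*$ is the union of the spectra of the clock operators $A_{L,m}$, $m=0,\ldots,R$. The plan has four steps: locate the smallest positive eigenvalue in this union, reduce it to a scalar equation for $m=1$, solve that equation in closed form, and then extract the bracket and the asymptotics. The $N$-independence will then be automatic.

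\emph{Step 1 (which sector wins).} For $m=0$ the spectrum is $\{\lambda_k\}$. Its smallest positive value is $\lambda_1=1-\cos(2\pi/L)$. For $m\ge1$, $A_{L,m}=L_L+m\ketbra00$ is positive definite, since its kernel would need a constant vector vanishing at $0$. It is also monotone increasing in $m$ in the operator order. Hence the lowest eigenvalue $\mu_0(m)$ is nondecreasing in $m$, and the candidate gap is $\min\{\lambda_1,\mu_0(1)\}$. I will show $\mu_0(1)<1-\cos(\pi/L)<\lambda_1$. Because every $m\ge1$ sector appears for each $N\ge2$ (as $R\ge3$), the gap is $E_*\mu_0(1)$ for all $N$.

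\emph{Step 2 (symmetric reduction and the secular equation).} The pin at $t=0$ commutes with the reflection $t\mapsto -t$. Antisymmetric vectors vanish at $t=0$ and $t=L/2$, so they only see eigenvalues $\lambda_k$ with $k\ge1$. The lowest eigenvalue therefore lies in the symmetric sector. There I use the ansatz $\psi_t=\cos(\theta(t-L/2))$ for $0\le t\le L$, which is automatically symmetric about $L/2$. In the bulk, $\tfrac12(2\psi_t-\psi_{t+1}-\psi_{t-1})=(1-\cos\theta)\psi_t$, so $z=1-\cos\theta$. The site $t=0$ has neighbours $\psi_{\pm1}=\cos(\theta(1-L/2))$, and imposing the equation there with the extra $+m\psi_0$ gives
\[
\cos\!\bigl(\theta(L/2-1)\bigr)-\cos\!\bigl(\theta L/2\bigr)=m\cos\!\bigl(\theta L/2\bigr)
\]
(up to sign conventions I will check). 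Expanding $\cos(a-\theta)$ and dividing by $\sin\theta\cos(L\theta/2)$ turns the $m=1$ case into $\tan(L\theta/2)=1/\sin\theta$. Equivalently, I can start from the secular factor in \eqref{eq:secular} and use the closed form of $\sum_k 1/(\lambda_k-z)$, which should be a $\cot$-type expression in $\theta$. I would use this route as a cross-check.

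\emph{Step 3 (existence, uniqueness, lowest root).} On $(0,\pi/L)$ the left side $\tan(L\theta/2)$ increases from $0$ to $+\infty$. The right side $1/\sin\theta$ decreases from $+\infty$ to $1/\sin(\pi/L)$. There is therefore exactly one crossing. Since $z=1-\cos\theta$ is increasing in $\theta$, this root gives an eigenvalue below $1-\cos(\pi/L)$. It remains to show there is no smaller eigenvalue, either from imaginary $\theta$ (i.e.\ $z<0$, excluded by positivity) or from another branch. For this I use that the secular function in \eqref{eq:secular} is monotone in $z$ on $(0,\lambda_1)$, so it has exactly one zero there.

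\emph{Step 4 (bracket and expansion).} For the bracket I evaluate $f(\theta)=\tan(L\theta/2)\sin\theta-1$ at $\theta=\pi/(L+2)$ and show it is negative. This is the main obstacle: writing $L\theta/2=\pi/2-\theta$ gives $f=\cot\theta\,\sin\theta-1=\cos\theta-1<0$, which resolves it. The upper endpoint $\theta\to\pi/L$ gives $f\to+\infty$. Monotonicity then places $\theta_L$ strictly between $\pi/(L+2)$ and $\pi/L$. For the asymptotics I set $\theta=\pi/L-\epsilon$. Then $\tan(L\theta/2)=\cot(L\epsilon/2)$, and the equation becomes $\tan(L\epsilon/2)=\sin\theta$, giving $L\epsilon/2\approx\pi/L$, i.e.\ $\epsilon=2\pi/L^2+O(L^{-3})$ after iteration. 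Finally, $1-\cos\theta=\theta^2/2+O(\theta^4)$ with $\theta^2=\pi^2/L^2-4\pi^2/L^3+O(L^{-4})$ yields \eqref{eq:gap-asymptotic}.
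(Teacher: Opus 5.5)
Your route is essentially the paper's. You use min--max monotonicity in $m$ to reduce to $A_{L,1}$, the same reflection-symmetric ansatz $\cos(\theta(t-L/2))$, the same evaluation at $\pi/(L+2)$ via $L\theta/2=\pi/2-\theta$, and the same substitution $\theta=\pi/L-\epsilon$ for the expansion.

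The one variation is how you certify that $1-\cos\theta_L$ is the bottom of $\spec(A_{L,1})$. You combine positive definiteness with strict monotonicity of the secular function $1+\frac1L\sum_k(\lambda_k-z)^{-1}$ on $(0,\lambda_1)$. The paper instead uses strict positivity of the eigenvector together with irreducibility (Perron--Frobenius). Both are valid. With the secular argument in hand, your reflection-sector remark is redundant but harmless.

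There is one slip to fix, and it is not a sign convention. The defect-site equation you display is wrong. At $t=0$ the equation is $\psi_0-\psi_1+m\psi_0=(1-\cos\theta)\psi_0$, that is,
\[
\cos\bigl(\theta(L/2-1)\bigr)-\cos\theta\,\cos(\theta L/2)=m\cos(\theta L/2),
\]
equivalently $(m+\cos\theta)f_0=f_1$ as in the paper. Expanding $\cos(L\theta/2-\theta)$ gives $\sin(L\theta/2)\sin\theta=m\cos(L\theta/2)$, hence $\tan(L\theta/2)=m/\sin\theta$. Your version drops the factor $\cos\theta$ that comes from the eigenvalue term. Carried through honestly, it would give $\tan(L\theta/2)=(m+1-\cos\theta)/\sin\theta$ instead.

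Finally, when you iterate $\tan(L\epsilon/2)=\sin\theta$, note that the bracket already gives $0<\epsilon<2\pi/(L(L+2))$. That a priori smallness is what justifies the expansion.
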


\begin{proof}
The $m=0$ block has the constant zero mode and first excitation
$1-\cos(2\pi/L)$.  For $m\ge1$, the lowest eigenvalue of $A_{L,m}$ is
monotone nondecreasing in $m$ by the min--max principle.  It is therefore
enough to solve the ground eigenvalue of $A_{L,1}$.

Write $\mu=1-\cos\theta$ with $0<\theta<\pi/L$ and choose the positive,
reflection-symmetric vector
\begin{equation}
 f_t=\cos\left[\left(\frac L2-t\right)\theta\right],
 \qquad t\in\Z_L.
 \label{eq:defect-vector}
\end{equation}
Away from $t=0$, the cycle recurrence gives
$L_Lf=(1-\cos\theta)f$.  At the defect site, the eigenvalue equation is
\begin{equation}
 (1+\cos\theta)f_0-f_1=0.
\end{equation}
Substitution of \eqref{eq:defect-vector} reduces this condition to
\eqref{eq:theta-equation}.  On $(0,\pi/L)$ the left side of
\eqref{eq:theta-equation} increases from zero to infinity, while
$1/\sin\theta$ decreases, so the root is unique.  The vector is strictly
positive; irreducibility of the defect Laplacian identifies it as the ground
state of $A_{L,1}$.

At $\theta_0=\pi/(L+2)$,
$\tan(L\theta_0/2)=\cot\theta_0<1/\sin\theta_0$, while the tangent diverges as
$\theta\uparrow\pi/L$.  Thus
$\pi/(L+2)<\theta_L<\pi/L$, proving \eqref{eq:gap-bracket}.  This defect
eigenvalue lies below $1-\cos(\pi/L)$ and hence below the first positive
$m=0$ eigenvalue.  It is therefore the global gap.

Finally, set $\delta_L=\pi-L\theta_L$.  Equation
\eqref{eq:theta-equation} becomes
$\cot(\delta_L/2)=1/\sin[(\pi-\delta_L)/L]$.  Taylor expansion gives
$\delta_L=2\pi/L+O(L^{-2})$, hence
$\theta_L=\pi/L-2\pi/L^2+O(L^{-3})$.  Expanding $1-\cos\theta_L$ yields
\eqref{eq:gap-asymptotic}.
\end{proof}

\begin{figure}[t]
\centering
\includegraphics[width=\columnwidth]{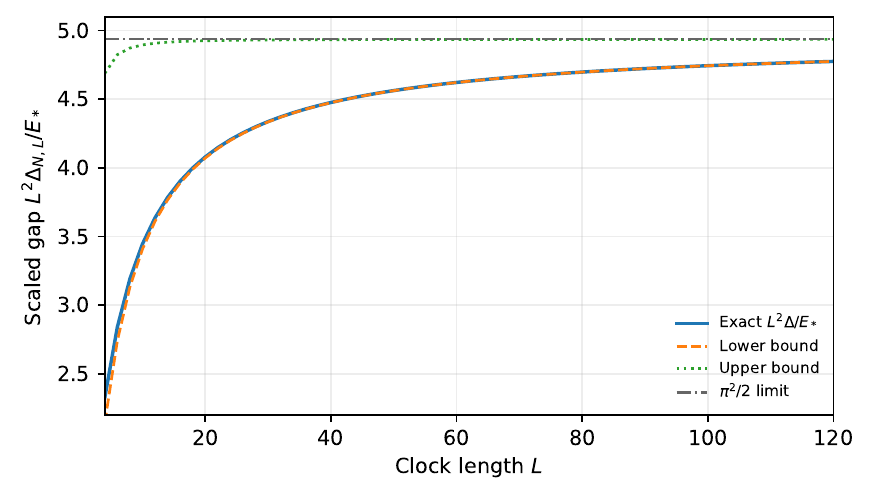}
\caption{Exact scaled cyclic gap from \cref{eq:theta-equation}, together with
the two analytic bounds in \cref{eq:gap-bracket}.  The scaled gap tends to
$\pi^2/2$; the complete result is independent of the number of vertices.}
\label{fig:gap}
\end{figure}

\subsection{Low-energy, thermal, and perturbative certification}

Let $P_{N,L}=\ketbra{\Psi_{N,L}}{\Psi_{N,L}}$ and let
$D_{N,L}=L2^R$ be the Hilbert-space dimension.

\begin{theorem}[History certification]
\label{thm:certification}
For any state $\rho$,
\begin{align}
 1-\Tr(P_{N,L}\rho)
 &\le\frac{\Tr(\rho\Hcyc)}{\Delta_{N,L}},
 \label{eq:energy-fidelity}\\
 \frac12\norm{\rho-P_{N,L}}_1
 &\le\sqrt{\frac{\Tr(\rho\Hcyc)}{\Delta_{N,L}}}.
 \label{eq:trace-cert}
\end{align}
For the Gibbs state
$\rho_\beta=e^{-\beta\Hcyc}/\Tr(e^{-\beta\Hcyc})$,
\begin{equation}
 1-\Tr(P_{N,L}\rho_\beta)
 \le(D_{N,L}-1)e^{-\beta\Delta_{N,L}}.
 \label{eq:gibbs-cert}
\end{equation}
\end{theorem}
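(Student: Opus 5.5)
The plan is to use the spectral decomposition of $\Hcyc$ supplied by \Cref{thm:gauge-ground,thm:exact-gap}. The kernel is exactly $\C\ket{\Psi_{N,L}}$, and every other eigenvalue is at least $\Delta_{N,L}$. From this I will derive the operator inequality
\begin{equation}
 \Hcyc\ \ge\ \Delta_{N,L}\,(\id-P_{N,L}).
\end{equation}
Taking the trace against any state $\rho$ gives $\Tr(\rho\Hcyc)\ge\Delta_{N,L}\bigl[1-\Tr(P_{N,L}\rho)\bigr]$. This is exactly \eqref{eq:energy-fidelity}.

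For the trace-distance bound \eqref{eq:trace-cert}, I will use the Fuchs--van de Graaf inequality with a pure target. For a pure state $P=\ketbra\Psi\Psi$ and an arbitrary $\rho$, one has $\tfrac12\norm{\rho-P}_1\le\sqrt{1-\bra\Psi\rho\ket\Psi}$. If I want to avoid citing it, I will prove it in two steps. First, purify $\rho$ and use Uhlmann's theorem to reduce to pure states. Second, compute the trace distance of two pure states directly: $\tfrac12\norm{\ketbra\phi\phi-\ketbra\Psi\Psi}_1=\sqrt{1-|\langle\phi|\Psi\rangle|^2}$. Monotonicity of trace distance under partial trace then finishes it. Inserting \eqref{eq:energy-fidelity} under the square root yields \eqref{eq:trace-cert}.

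For the Gibbs bound, I will work in the eigenbasis of $\Hcyc$. Let the eigenvalues be $0=E_0<\Delta_{N,L}\le E_1\le\dots\le E_{D-1}$, where $D=D_{N,L}=L\,2^R$ is the total dimension. Then
\begin{equation}
 1-\Tr(P_{N,L}\rho_\beta)=\frac{\sum_{j\ge1}e^{-\beta E_j}}{1+\sum_{j\ge1}e^{-\beta E_j}}\le\sum_{j\ge1}e^{-\beta E_j}\le (D-1)e^{-\beta\Delta_{N,L}}.
\end{equation}
This step uses only the uniqueness of the zero mode and the gap.

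The one point needing care is the Fuchs--van de Graaf step for a mixed $\rho$. I will handle it either by citation or by the purification argument above. Everything else is immediate from the spectral theorem and the earlier results.
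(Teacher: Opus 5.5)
Your proposal is correct and follows the paper's proof essentially step for step: the spectral operator inequality $\Hcyc\ge\Delta_{N,L}(\id-P_{N,L})$, the Fuchs--van de Graaf inequality for a pure reference state, and bounding the $D_{N,L}-1$ excited Boltzmann weights by $e^{-\beta\Delta_{N,L}}$. Your optional purification argument is a valid self-contained replacement for the citation the paper uses in the Fuchs--van de Graaf step.
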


\begin{proof}
The spectral theorem gives
$\Hcyc\ge\Delta_{N,L}(\id-P_{N,L})$, proving
\eqref{eq:energy-fidelity}.  The trace-distance bound follows from the
Fuchs--van de Graaf inequality for a pure reference state
\cite{NielsenChuang2010}.  For the Gibbs state, every excited Boltzmann weight
is at most $e^{-\beta\Delta_{N,L}}$, and there are at most
$D_{N,L}-1$ excited levels counting multiplicity.
\end{proof}

\begin{theorem}[Finite perturbative stability]
\label{thm:stability}
Let $V=V^\dagger$ and
$\widetilde H=\Hcyc+V$.  If
$\varepsilon:=\norm V<\Delta_{N,L}/2$, then the ground eigenvalue of
$\widetilde H$ is simple, its gap satisfies
\begin{equation}
 \widetilde\Delta\ge\Delta_{N,L}-2\varepsilon,
 \label{eq:perturbed-gap}
\end{equation}
and its rank-one ground projector $\widetilde P$ obeys
\begin{equation}
 \norm{\widetilde P-P_{N,L}}
 \le\min\left\{1,\frac{\varepsilon}{\Delta_{N,L}-2\varepsilon}\right\}.
 \label{eq:projector-bound}
\end{equation}
In particular, the admissible full-operator-norm window scales as $L^{-2}$.
\end{theorem}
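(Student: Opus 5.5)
We treat $V$ as a bounded Hermitian perturbation of $\Hcyc$, which has a simple eigenvalue $0$ separated from the rest of its spectrum by exactly $\Delta_{N,L}$ (\Cref{thm:gauge-ground,thm:exact-gap}). The whole argument is finite dimensional. It uses only Weyl's eigenvalue perturbation inequality and the Davis--Kahan $\sin\Theta$ theorem. No structure of the cyclic model is needed beyond the exact gap.

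\textbf{Eigenvalues.} Order the eigenvalues of $\Hcyc$ as $0=\lambda_1<\lambda_2=\Delta_{N,L}\le\lambda_3\le\cdots$. Order those of $\widetilde H$ as $\widetilde\lambda_1\le\widetilde\lambda_2\le\cdots$. By Weyl's inequality (min--max), $|\widetilde\lambda_j-\lambda_j|\le\varepsilon$ for every $j$. Hence $\widetilde\lambda_1\le\varepsilon$ and $\widetilde\lambda_2\ge\Delta_{N,L}-\varepsilon$. This gives
\begin{equation*}
 \widetilde\Delta=\widetilde\lambda_2-\widetilde\lambda_1\ge\Delta_{N,L}-2\varepsilon>0
\end{equation*}
when $\varepsilon<\Delta_{N,L}/2$. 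In particular the ground eigenvalue of $\widetilde H$ is simple, which proves \eqref{eq:perturbed-gap}.

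\textbf{Projector.} Since $\widetilde P$ and $P_{N,L}$ are rank-one orthogonal projectors, $\norm{\widetilde P-P_{N,L}}=\sin\angle(\widetilde\psi,\Psi_{N,L})\le 1$. This is the trivial branch of the minimum. For the nontrivial branch we use Davis--Kahan in its one-sided form. Let $\widetilde\psi$ satisfy $\widetilde H\widetilde\psi=\widetilde\lambda_1\widetilde\psi$, and set $Q=\id-P_{N,L}$. Then
\begin{equation*}
 (\Hcyc-\widetilde\lambda_1)Q\widetilde\psi=-QV\widetilde\psi .
\end{equation*}
On $\Ran Q$ we have $\Hcyc\ge\Delta_{N,L}$ and $\widetilde\lambda_1\le\varepsilon$, so $\Hcyc-\widetilde\lambda_1\ge\Delta_{N,L}-\varepsilon$ there. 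It follows that
\begin{equation*}
 \norm{Q\widetilde\psi}\le\frac{\varepsilon}{\Delta_{N,L}-\varepsilon}\le\frac{\varepsilon}{\Delta_{N,L}-2\varepsilon}.
\end{equation*}
The left side equals $\sin\angle$, which is the operator norm of the difference of the two rank-one projectors. This proves \eqref{eq:projector-bound}. The stated denominator $\Delta_{N,L}-2\varepsilon$ is the separation between the perturbed gap edge and the unperturbed one, so either version of the bound suffices.

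\textbf{Scaling remark.} By \eqref{eq:gap-bracket} and \eqref{eq:gap-asymptotic}, $\Delta_{N,L}=\Theta(E_*L^{-2})$ uniformly in $N$. The admissible window $\varepsilon<\Delta_{N,L}/2$ is therefore of order $L^{-2}$.

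The only step needing care is identifying $\norm{\widetilde P-P}$ with the sine of the angle for rank-one projectors. This follows from the fact that $\widetilde P-P$ has eigenvalues $\pm\sin\angle$ on the span of the two vectors.
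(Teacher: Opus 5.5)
Your proof is correct and follows essentially the paper's argument: Weyl's inequality for the ordered eigenvalues, then projecting the perturbed eigenvalue equation onto $\Ran Q$ and using a spectral lower bound there. The only difference is that the paper bounds the compression $Q(\widetilde H-\widetilde E_0)Q$ below by $\Delta_{N,L}-2\varepsilon$, whereas you bound the unperturbed operator $\Hcyc-\widetilde\lambda_1\ge\Delta_{N,L}-\varepsilon$ on $\Ran Q$ and move all of $QV\widetilde\psi$ to the right-hand side, which gives the slightly sharper intermediate bound $\varepsilon/(\Delta_{N,L}-\varepsilon)$ before you relax it to the stated one.
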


\begin{proof}
Weyl inequalities shift each ordered eigenvalue by at most $\varepsilon$,
which gives \eqref{eq:perturbed-gap} and simplicity.  For the projector bound, write $P=P_{N,L}$, $Q=\id-P$, and let
$\ket{\varphi}$ be a normalized perturbed ground state of energy
$\widetilde E_0$.  Weyl's bound gives
$|\widetilde E_0|\le\varepsilon$.
The compression
$Q(\widetilde H-\widetilde E_0)Q$ is bounded below by
$(\Delta_{N,L}-2\varepsilon)Q$ on $\Ran Q$.  Projecting the
eigenvalue equation onto $\Ran Q$ therefore gives
$\norm{Q\varphi}\le\varepsilon/(\Delta_{N,L}-2\varepsilon)$.
For rank-one orthogonal projectors
$\norm{\widetilde P-P}=\norm{Q\varphi}\le1$, yielding
\eqref{eq:projector-bound}; cf. standard matrix perturbation theory
\cite{Bhatia1997}.
\end{proof}

The theorem is finite-system stability, not a thermodynamic phase theorem.
Because the gap closes as $L^{-2}$, an extensive perturbation cannot be
controlled by its full operator norm uniformly in $L$.  A genuine phase result
would require an interaction norm, local spectral flow, or an independently
isolated sector.

\section{Order-free Boolean histories with a constant gap}
\label{sec:boolean}

The cyclic construction packages the complete update $W_N$ into one clock
transition.  Since the factors $W_e$ commute, there is another natural
history: record the subset of pair updates already applied, without choosing an
ordering of $\EL$.

Assign a clock qubit $\mathcal T_e\simeq\C^2$ to every $e\in\EL$ and let
\begin{equation}
 \T_N^{\mathrm B}=\bigotimes_{e\in\EL}\mathcal T_e.
 \label{eq:boolean-clock}
\end{equation}
Denote by $X_e^{\T}$ the Pauli $X$ on $\mathcal T_e$.  Define
\begin{equation}
 h_e=\frac12\left(\id-X_e^{\T}\otimes W_e\right)
 +\ketbra{0}{0}_{\mathcal T_e}\otimes\ketbra{1}{1}_{\mathcal R_e}.
 \label{eq:local-boolean-term}
\end{equation}
The first summand is a propagation projector because
$X_e^{\T}\otimes W_e$ is a Hermitian involution.  The second fixes the
relation bit before its local clock fires.

\begin{construction}[Boolean incidence-history Hamiltonian]
\label{con:boolean}
Define
\begin{equation}
 \boxed{\Hbool=E_*\sum_{e\in\EL}h_e.}
 \label{eq:Hboolean}
\end{equation}
Every $h_e$ acts on one clock qubit, one relation qubit, and the two endpoint
controls.  Thus the interaction arity is at most four, independent of $N$.
No order on the pair set appears.
\end{construction}

For a computational control string $c\in\{0,1\}^N$, set
\begin{equation}
 q_e(c)=1-\delta_{c_i,0}\delta_{c_j,0}
 \qquad(e=\{i,j\}).
 \label{eq:qe}
\end{equation}
On this control block, $W_e$ is $\id$ if $q_e(c)=0$ and $X$ if
$q_e(c)=1$.  Introduce
\begin{equation}
 \ket{\phi_0}=\ket+_{\mathcal T}\ket0_{\mathcal R},
 \qquad
 \ket{\phi_1}=\frac{\ket{00}+\ket{11}}{\sqrt2}.
 \label{eq:local-ground-vectors}
\end{equation}

\begin{theorem}[Exact Boolean-history solution]
\label{thm:boolean-solution}
The terms $h_e$ commute for distinct pairs.  The complete ground space of
\eqref{eq:Hboolean} has orthonormal basis
\begin{equation}
 \ket{\Phi_c}=\ket c_{\mathcal C}\otimes
 \bigotimes_{e\in\EL}\ket{\phi_{q_e(c)}}_{\mathcal T_e\mathcal R_e},
 \qquad c\in\{0,1\}^N,
 \label{eq:boolean-ground}
\end{equation}
so $\dim\Ker\Hbool=2^N$.  Its full spectrum is the additive multiset generated
by
\begin{equation}
 \left\{0,\ 1-\frac1{\sqrt2},\ 1,\ 1+\frac1{\sqrt2}\right\}E_*
 \label{eq:local-spectrum}
\end{equation}
for each of the $M$ pairs, repeated over the $2^N$ control blocks.  In
particular,
\begin{equation}
 \boxed{\Delta_N^{\mathrm B}=E_*\left(1-\frac1{\sqrt2}\right)}
 \label{eq:boolean-gap}
\end{equation}
for every $N\ge2$.
\end{theorem}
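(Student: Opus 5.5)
The proof decomposes along the computational basis of the controls.

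\emph{Commutation and block reduction.} Every $h_e$ is diagonal in the control computational basis, since $W_e$ depends on the controls only through $Q_e$. Distinct pairs $e\ne f$ act on disjoint clock and relation registers. On a control block $\ket c$, the restriction of $h_e$ is an operator $h^{(q)}$ on $\mathcal T_e\otimes\mathcal R_e$ alone, with $q=q_e(c)$. Operators that are simultaneously control-diagonal and supported on disjoint noncontrol factors commute, so $[h_e,h_f]=0$. For fixed $c$, the block Hamiltonian is therefore
\[
\Hbool\big|_{c}=E_*\sum_{e\in\EL}h^{(q_e(c))}_{\mathcal T_e\mathcal R_e},
\]
a sum of commuting terms acting on distinct tensor factors. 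Its spectrum is the additive multiset of the local spectra, and its ground space is the tensor product of the local ground spaces. This reduces the whole theorem to diagonalizing the two $4\times4$ matrices $h^{(0)}$ and $h^{(1)}$.

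\emph{The two-qubit computation.} For $q=0$ we have $W_e=\id$, so
\[
h^{(0)}=\tfrac12(\id-X)\otimes\id+\ketbra00\otimes\ketbra11 .
\]
This operator is diagonal in the relation basis.
\begin{itemize}
\item On relation $\ket0$ it reduces to $\tfrac12(\id-X)$, with eigenvalues $0$ (eigenvector $\ket+$) and $1$.
\item On relation $\ket1$ it reduces to $\tfrac12(\id-X)+\ketbra00$. This has trace $2$ and determinant $\tfrac12$, so its eigenvalues are $1\pm1/\sqrt2$.
\end{itemize}
The spectrum of $h^{(0)}$ is therefore $\{0,1,1-1/\sqrt2,1+1/\sqrt2\}$, with unique ground vector $\ket+\ket0=\ket{\phi_0}$.

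For $q=1$ the propagation term is $\tfrac12(\id-X\otimes X)$. Work in the Bell basis.
\begin{itemize}
\item The states $(\ket{00}\pm\ket{11})/\sqrt2$ and $(\ket{01}\pm\ket{10})/\sqrt2$ are eigenvectors of $X\otimes X$ with eigenvalue $\pm1$.
\item The pin $\ketbra00\otimes\ketbra11=\ketbra{01}{01}$ couples only $(\ket{01}\pm\ket{10})/\sqrt2$.
\item Hence $\ket{\phi_1}$ has energy $0$ and $(\ket{00}-\ket{11})/\sqrt2$ has energy $1$.
\item On the span of $(\ket{01}\pm\ket{10})/\sqrt2$ the matrix is $\begin{pmatrix}1/2&1/2\\1/2&3/2\end{pmatrix}$. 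This again has trace $2$ and determinant $1/2$, giving eigenvalues $1\pm1/\sqrt2$.
\end{itemize}
Both local operators therefore have the spectrum $\{0,1-1/\sqrt2,1,1+1/\sqrt2\}$, independently of $q$.

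\emph{Assembly.} In each control block the ground space is one-dimensional and spanned by $\bigotimes_e\ket{\phi_{q_e(c)}}$. Taking all $c\in\{0,1\}^N$ gives the orthonormal family $\ket{\Phi_c}$ and $\dim\Ker\Hbool=2^N$. The full spectrum is the stated additive multiset over the $M$ pairs, repeated over the $2^N$ blocks. The smallest nonzero sum comes from raising a single pair to $1-1/\sqrt2$, so $\Delta^{\mathrm B}_N=E_*(1-1/\sqrt2)$. Because $M\ge1$ for $N\ge2$, this value is attained, independently of $N$.

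There is no real obstacle. The only care needed is in computing $h^{(1)}$: one must check that the pin $\ketbra00_{\mathcal T}\otimes\ketbra11_{\mathcal R}$ lands on $\ket{01}$ in the ordering $\mathcal T_e\mathcal R_e$, and then verify the Bell-basis $2\times2$ block.
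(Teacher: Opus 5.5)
Your proof is correct and follows the paper's route: decompose into control blocks, diagonalize the two $4\times4$ local operators, and assemble the spectrum and ground space additively over tensor factors. The only difference is that the paper gets $h_1$ from $h_0$ by conjugating with a CNOT (clock control, relation target), which fixes the pin, sends $X\otimes\id$ to $X\otimes X$, and maps $\ket{\phi_0}$ to $\ket{\phi_1}$, so isospectrality is immediate; your direct Bell-basis computation of the $2\times2$ block reaches the same conclusion.
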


\begin{proof}
The control projectors $Q_e$ are simultaneously diagonal.  Distinct terms
$h_e$ and $h_f$ act on disjoint clock--relation pairs, so they commute even
when their endpoint sets overlap.

Fix a control string $c$.  The local term is either
\begin{align}
 h_0&=\frac12(\id-X\otimes\id)+\ketbra{0}{0}\otimes\ketbra{1}{1},
 \label{eq:h0}\\
 h_1&=\frac12(\id-X\otimes X)+\ketbra{0}{0}\otimes\ketbra{1}{1}.
 \label{eq:h1}
\end{align}
A CNOT from the clock qubit to the relation qubit conjugates $h_0$ to $h_1$.
In the relation-$0$ block, $h_0$ has eigenvalues $0,1$; in the relation-$1$
block it has matrix
\begin{equation}
 \begin{pmatrix}3/2&-1/2\\-1/2&1/2\end{pmatrix}
\end{equation}
with eigenvalues $1\pm1/\sqrt2$.  The unique local ground vectors are exactly
\eqref{eq:local-ground-vectors}.  Tensoring the unique local ground vector for
each pair yields \eqref{eq:boolean-ground}, and no additional zero modes are
possible.  Additivity of commuting pair blocks gives the full spectrum and
\eqref{eq:boolean-gap}.
\end{proof}

For fixed $c$, the ground vector has the explicit order-free history form
\begin{equation}
 \ket{\Phi_c}=2^{-M/2}\sum_{S\subseteq\EL}
 \ket S_{\mathcal T}\otimes
 \left(\prod_{e\in S}W_e\right)
 \ket c_{\mathcal C}\ket{0^M}_{\mathcal R}.
 \label{eq:subset-history}
\end{equation}
Because the $W_e$ commute, only the subset $S$ matters; no linear extension or
preferred pair ordering is hidden in the clock.

\subsection{Fixed-weight sectors and invariant histories}

Let $n_i=\ketbra{1}{1}_i$ and, for $q\in\{0,\ldots,N\}$, define
\begin{equation}
 C_q=\left(\sum_{i=1}^N n_i-q\id\right)^2,
 \qquad
 H_{N,q}^{\mathrm B}=\Hbool+E_*C_q.
 \label{eq:q-sector}
\end{equation}
The square expands into one- and two-control terms on the complete pair orbit,
so it is prelocal but exactly permutation invariant.

\begin{corollary}[Exact control-sector selection]
\label{cor:q-sector}
The ground space of $H_{N,q}^{\mathrm B}$ is
\begin{equation}
 \operatorname{span}\{\Phi_c:|c|=q\},
 \qquad
 \dim\Ker H_{N,q}^{\mathrm B}=\binom Nq,
 \label{eq:q-ground}
\end{equation}
and its gap remains $E_*(1-2^{-1/2})$.  The normalized vector
\begin{equation}
 \ket{\Phi_{N,q}^{\mathrm{sym}}}=\binom Nq^{-1/2}
 \sum_{|c|=q}\ket{\Phi_c}
 \label{eq:symmetric-history}
\end{equation}
spans the one-dimensional $S_N$-invariant subspace of this orbit sector.  For $q=0$ or
$q=N$, the ground state is unique.
\end{corollary}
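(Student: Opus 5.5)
The plan is to exploit the fact that $C_q$ commutes with $\Hbool$ and with every control-diagonal projector, and then work block by block in the computational control basis, reusing the block structure from \Cref{thm:boolean-solution}. First I will check the commutation. $C_q$ is a function of the number operators $n_i$ only, so it is diagonal in the control basis $\ket c$ with eigenvalue $(|c|-q)^2$. Each $h_e$ preserves the control string, because $Q_e$ is control-diagonal and $X_e^{\T}$, $X_e$ act only on clock and relation qubits. Hence $[C_q,h_e]=0$, and $H_{N,q}^{\mathrm B}$ decomposes over the $2^N$ control blocks. On block $c$ it equals the $\Hbool$ block plus the constant shift $E_*(|c|-q)^2$.

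Second, I will read off the spectrum from \Cref{thm:boolean-solution}. On each block $c$, the $\Hbool$ part has a unique zero mode $\Phi_c$, and its first excitation is exactly $E_*(1-2^{-1/2})$, since the local spectrum \eqref{eq:local-spectrum} is the same on every control block. With the shift, block $c$ has minimum eigenvalue $E_*(|c|-q)^2$. This minimum is zero iff $|c|=q$; otherwise it is at least $E_*>E_*(1-2^{-1/2})$. So the kernel is $\operatorname{span}\{\Phi_c:|c|=q\}$, of dimension $\binom Nq$. The first positive eigenvalue is $\min\{E_*(1-2^{-1/2}),E_*\}=E_*(1-2^{-1/2})$, which is attained inside any weight-$q$ block; this gives \eqref{eq:q-ground} and the unchanged gap.

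Third, I will show that the symmetric history vector spans the invariant subspace. The label permutation $U_\pi$ acts simultaneously on controls and on the pair registers, sending $e\mapsto\pi e$. Since $q_{\pi e}(\pi c)=q_e(c)$ and the pair states $\phi_{0},\phi_1$ are fixed, $U_\pi\Phi_c=\Phi_{\pi c}$ with no phase. I expect this phase check to be the only real point of care, and it holds because the tensor factors are merely relabeled. The weight-$q$ strings form a single transitive $S_N$-orbit, so the $\Phi_c$ with $|c|=q$ carry the permutation representation on that orbit. That representation has exactly one trivial subrepresentation, by Burnside or by the fact that an invariant vector must have constant coefficients on a transitive orbit, and it is spanned by the uniform sum \eqref{eq:symmetric-history}. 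The normalization follows from orthonormality of the $\Phi_c$. Finally, for $q=0$ or $q=N$ we have $\binom Nq=1$, so the ground state is unique.
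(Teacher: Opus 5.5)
Your proof is correct and follows essentially the same route as the paper. In both, $C_q$ commutes with $\Hbool$, shifts each control block by $E_*(|c|-q)^2$ (whose first positive value $E_*$ exceeds the Boolean gap), and transitivity of $S_N$ on weight-$q$ strings makes the invariant subspace one-dimensional; your explicit block bookkeeping and phase-free check $U_\pi\Phi_c=\Phi_{\pi c}$ (also recorded in \Cref{thm:permutation}) just fill in details the paper leaves implicit.
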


\begin{proof}
The operator $C_q$ commutes with $\Hbool$, vanishes exactly on control strings
of weight $q$, and has first positive eigenvalue one.  Since the Boolean gap is
smaller than one, adding $C_q$ does not change the gap.  Transitivity of $S_N$
on the weight-$q$ strings makes the invariant subspace one-dimensional.
\end{proof}

The symmetric vector \eqref{eq:symmetric-history} is selected kinematically
inside the sector but is not the unique ground state unless an additional
orbit-mixing or global symmetry selector is supplied.  This is a deliberate
claim boundary: exact microscopic covariance does not by itself produce a
unique symmetry-invariant pure state.

\subsection{Sharp clock-memory lower bound}

\begin{proposition}[Orthogonal subset-clock cost]
\label{prop:clock-memory}
Suppose a sharp order-free clock assigns mutually orthogonal clock events to
all subsets $S\subseteq\EL$ of $M$ independently recordable commuting updates.
Then
\begin{equation}
 \dim\T\ge2^M,
 \qquad
 \log_2\dim\T\ge M.
 \label{eq:clock-lower-bound}
\end{equation}
The Boolean clock \eqref{eq:boolean-clock} saturates both bounds.
\end{proposition}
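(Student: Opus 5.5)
The lower bound is a dimension count. I would first formalize the hypothesis: a sharp order-free clock is a Hilbert space $\T$ together with projections $\{\Pi_S\}_{S\subseteq\EL}$ on $\T$. Each $\Pi_S$ is nonzero, so that event $S$ is recordable, and $\Pi_S\Pi_{S'}=0$ for $S\ne S'$, which is the mutual orthogonality of the events. If the clock events are given as vectors $\ket S\in\T$ rather than projections, take $\Pi_S=\ketbra SS$ after normalization.

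The key step is the elementary fact that pairwise orthogonal nonzero subspaces of a finite-dimensional space are linearly independent. I would pick a unit vector $v_S\in\Ran\Pi_S$ for each subset $S$. The family $\{v_S\}$ is then orthonormal, so it is linearly independent. Since it is indexed by the power set of $\EL$, it has $2^M$ elements, which gives $\dim\T\ge\sum_S\operatorname{rank}\Pi_S\ge 2^M$. Taking $\log_2$ of both sides gives the second inequality in \eqref{eq:clock-lower-bound}, because $\log_2$ is monotone.

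For saturation, I would use the Boolean clock \eqref{eq:boolean-clock}. It has $\dim\T_N^{\mathrm B}=2^M$, one qubit for each of the $M$ pairs. Its computational basis $\ket S=\bigotimes_{e}\ket{\mathbf 1_{e\in S}}_{\mathcal T_e}$ is orthonormal and in bijection with the subsets $S\subseteq\EL$. These are exactly the clock labels appearing in the history \eqref{eq:subset-history}, so equality holds in both bounds.

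The only delicate point is interpretive rather than technical: the hypothesis must be read as requiring nonzero, pairwise orthogonal events indexed by all $2^M$ subsets. Once that reading is fixed, the argument is immediate. I would remark that independence of the updates, which follows from the commutation in \Cref{lem:update-algebra}, is what makes all subsets admissible as distinct events. Weakening sharpness, for example to nonorthogonal POVM elements, would break the bound, and this explains the word ``sharp'' in the statement.
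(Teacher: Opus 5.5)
Your proof is correct and matches the paper's argument: $2^M$ mutually orthogonal nonzero clock events force $\dim\T\ge 2^M$, and the $M$-qubit Boolean clock, whose computational basis is indexed by the subsets $S\subseteq\EL$, has exactly that dimension. Phrasing the events as pairwise orthogonal nonzero projections $\Pi_S$ is just a slightly more careful version of the same dimension count.
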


\begin{proof}
There are $2^M$ subsets, and a Hilbert space containing $2^M$ mutually
orthogonal nonzero vectors has dimension at least $2^M$.  The tensor product of
$M$ clock qubits has exactly that dimension.
\end{proof}

The proposition does not rule out compressed nonorthogonal, probabilistic, or
coarse-grained clocks.  It is a sharp statement for exact distinguishability of
all subset histories.

\section{Symmetry, resources, and claim boundary}
\label{sec:resources}

A permutation $\pi\in S_N$ acts simultaneously on controls, relation registers,
and---in the Boolean model---pair clocks by
$e=\{i,j\}\mapsto\pi e=\{\pi(i),\pi(j)\}$.  It acts trivially on the single
cyclic clock.

\begin{theorem}[Exact permutation covariance]
\label{thm:permutation}
For every $\pi\in S_N$,
\begin{align}
 [U_\pi,\Hcyc]&=0, &
 [U_\pi,\Hbool]&=0,\label{eq:permutation}\\
 U_\pi\ket{\Psi_{N,L}}&=\ket{\Psi_{N,L}}.&&
\end{align}
Moreover $U_\pi\ket{\Phi_c}=\ket{\Phi_{\pi c}}$ and
$U_\pi\ket{\Phi_{N,q}^{\mathrm{sym}}}=\ket{\Phi_{N,q}^{\mathrm{sym}}}$.
\end{theorem}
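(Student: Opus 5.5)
The plan is to first fix how $U_\pi$ moves each tensor factor, then check covariance of the building blocks, and only then assemble the Hamiltonians and ground vectors. The convention is that $U_\pi$ sends $\mathcal C_i\to\mathcal C_{\pi(i)}$, $\mathcal R_e\to\mathcal R_{\pi e}$, and $\mathcal T_e\to\mathcal T_{\pi e}$ in the Boolean model, and acts as the identity on $\T_L$. With this convention, conjugating an operator supported on a set of factors relabels its support. For every $e=\{i,j\}$ this gives
\begin{equation*}
U_\pi Q_eU_\pi^\dagger=Q_{\pi e},\quad U_\pi X_eU_\pi^\dagger=X_{\pi e},\quad U_\pi W_eU_\pi^\dagger=W_{\pi e},
\end{equation*}
using that $Q_e$ is symmetric in its two endpoint controls, and similarly $U_\pi X^{\T}_eU_\pi^\dagger=X^{\T}_{\pi e}$.

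For the cyclic model, $\pi$ acts on $\EL$ as a bijection $e\mapsto\pi e$. Together with \Cref{lem:update-algebra}, which makes the product $W_N$ independent of ordering, this gives $U_\pi W_NU_\pi^\dagger=W_N$. Since $U_\pi$ is trivial on the clock, $U_\pi A_t=A_tU_\pi$, and hence each $P_t$ commutes with $U_\pi$. The penalty $K_N$ is a sum of single-site terms over all vertices and all pairs, so it is mapped to itself; in particular, one should note that $\ketbra{+}{+}_i$ is carried to $\ketbra{+}{+}_{\pi(i)}$. This yields $[U_\pi,\Hcyc]=0$.

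For the cyclic ground state, $\ket{\Omega_N}$ is a uniform product state, so $U_\pi\ket{\Omega_N}=\ket{\Omega_N}$. Combined with $[U_\pi,W_N]=0$, every term $\ket t\otimes W_N^t\ket{\Omega_N}$ of \eqref{eq:cyclic-ground} is invariant, and therefore so is $\ket{\Psi_{N,L}}$. A second route is to use \Cref{thm:gauge-ground}: since $U_\pi$ commutes with $\Hcyc$, it preserves the one-dimensional kernel and acts there by a phase, and evaluating at $t=0$ shows the phase is $1$.

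For the Boolean model, the relations above give $U_\pi h_eU_\pi^\dagger=h_{\pi e}$, so $\Hbool$ is invariant because the sum runs over all of $\EL$. For the ground vectors, note that $q_{\pi e}(\pi c)=q_e(c)$, where $(\pi c)_{\pi(i)}=c_i$. Also, $U_\pi$ carries $\ket c_{\mathcal C}$ to $\ket{\pi c}$ and carries the factor $\ket{\phi_{q_e(c)}}$ on $\mathcal T_e\mathcal R_e$ to the same vector on $\mathcal T_{\pi e}\mathcal R_{\pi e}$. Reindexing the tensor product then gives $U_\pi\ket{\Phi_c}=\ket{\Phi_{\pi c}}$. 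Finally, $c\mapsto\pi c$ is a bijection of the weight-$q$ strings, so the sum in \eqref{eq:symmetric-history} is permuted and left invariant.

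The main obstacle is bookkeeping rather than substance. One has to state a consistent convention for $U_\pi$ on the pair-indexed registers, and in particular confirm that relabeling unordered pairs introduces no sign or orientation; it does not, since all registers are qubits permuted as tensor factors. One also has to check that the order-independence from \Cref{lem:update-algebra} is exactly what makes $W_N$ invariant rather than merely conjugated into a reordered product.
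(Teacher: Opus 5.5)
Your proposal is correct and follows essentially the paper's own argument. As in the paper, conjugation relabels $Q_e$, $W_e$ and the clock--relation factors, sums and (commuting) products over the full pair orbit are therefore invariant, $\Omega_N$ is symmetric within each register type, and the Boolean ground vectors are relabeled via $q_{\pi e}(\pi c)=q_e(c)$; you simply spell out the bookkeeping that the paper compresses into a few lines.
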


\begin{proof}
Conjugation permutes the family $\{Q_e,W_e\}_{e\in\EL}$ and the corresponding
clock--relation factors.  Every Hamiltonian is a sum or product over the full
pair orbit.  The initial cyclic state is symmetric within each register type,
and the Boolean ground vectors transform by relabeling the control string.
\end{proof}

\begin{center}
\refstepcounter{table}\label{tab:comparison}
\parbox{0.98\columnwidth}{\small\textbf{TABLE \thetable.} Exact resource comparison. ``Arity'' refers to the displayed tensor factorization, not to emergent spatial range.}
\vspace{3pt}
\begin{ruledtabular}
\begin{tabular}{@{}lcc@{}}
Resource & Cyclic model & Boolean model\\
\hline
Clock event space & $L$ & $2^M$\\
Clock qubits & $\lceil\log_2L\rceil$ & $M$\\
Pair-update ordering & cyclic global step & none\\
Propagation arity & global $W_N$ & at most $4$\\
Ground dimension & $1$ & $2^N$\\
With weight-$q$ selector & $1$ & $\binom Nq$\\
Exact gap & Eq.~\eqref{eq:exact-gap} & $E_*(1-2^{-1/2})$\\
Large-size behavior & $\Theta(L^{-2})$ & constant in $N$\\
\end{tabular}
\end{ruledtabular}
\end{center}

The two architectures solve different finite problems.  The cyclic model uses
one compact clock and a global commuting update to obtain a unique history.
The Boolean model resolves every pair update locally and removes the order, but
retains control-sector degeneracy and an exponentially large sharp event
space.  Neither architecture evades \cref{thm:orbit-trilemma}: the cyclic
propagation is explicitly prelocal, while the Boolean control qubits have
interaction degree $N-1$ through the complete pair orbit.  Bounded tensor arity
is not the same as bounded range in an emergent metric.

The following statements are therefore outside the paper:
\begin{enumerate}[label=(\roman*)]
\item selection of a sparse connected graph phase;
\item a volume-uniform Lieb--Robinson cone in a selected geometry;
\item a thermodynamic phase with stable local correlators;
\item chiral matter, a massless spin-two sector, or Einstein dynamics;
\item dynamical attraction of generic initial states to either exact history.
\end{enumerate}
The constructions are finite prelocal operator models.  They make the
remaining phase-selection and locality obligations explicit rather than
absorbing them into terminology.  The orbit--stabilizer, rooted-cycle,
and unitary-isometry facts used above are elementary ingredients; the
constructive statements concern the specified permutation-covariant
Hamiltonians and their exact finite spectra, not a claimed universal
locality or quantum-computation theorem.

\section{State-level cost of selecting one labeling}
\label{sec:state-concentration}

The orbit obstruction is not confined to diagonal Hamiltonians or classical
mixtures.  It also constrains arbitrary coherent and mixed states.
Let \(\mathcal O_G\) be the orthonormal graph-basis orbit of a labeled graph
\(G\), with
\[
 M_G=\abs{\mathcal O_G}=\frac{N!}{\abs{\Aut(G)}}.
\]
For a density operator \(\rho\), define its permutation twirl and worst-case
state covariance defect by
\begin{align}
 \mathcal G(\rho)&=\frac1{N!}\sum_{\pi\in S_N}U_\pi\rho U_\pi^\dagger,
 \\
 \delta_{\rm st}(\rho)&=
 \max_{\pi\in S_N}\frac12\norm{\rho-U_\pi\rho U_\pi^\dagger}_1.
\end{align}

\begin{theorem}[Coherent labeling-concentration bound]
\label{thm:state-concentration}
For every density operator \(\rho\) and every labeled graph basis vector
\(\ket G\),
\begin{equation}
 \boxed{
 \bra G\rho\ket G
 \le \frac1{M_G}+\delta_{\rm st}(\rho).}
 \label{eq:state-concentration}
\end{equation}
Consequently, concentration
\(\bra G\rho\ket G\ge1-\varepsilon\) requires
\begin{equation}
 \delta_{\rm st}(\rho)
 \ge1-\varepsilon-\frac1{M_G}.
 \label{eq:state-asymmetry-cost}
\end{equation}
\end{theorem}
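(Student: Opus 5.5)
The plan is to compare $\rho$ with its permutation twirl $\mathcal G(\rho)$. Two facts are needed: the twirl cannot concentrate more than $1/M_G$ on $\ket G$, and the twirl is close to $\rho$ in the sense that controls the diagonal entry $\bra G\cdot\ket G$.

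\textbf{Step one (the twirl is spread over the orbit).} Since $U_\pi\ket{G'}=\ket{\pi G'}$, we have $\bra{\pi G}\mathcal G(\rho)\ket{\pi G}=\bra G\mathcal G(\rho)\ket G$ for every $\pi$. This holds because $U_\pi^\dagger\mathcal G(\rho)U_\pi=\mathcal G(\rho)$ by reindexing the group average. Hence all $M_G$ orthonormal vectors of $\mathcal O_G$ carry the same diagonal weight $w$. Positivity of $\mathcal G(\rho)$, unit trace, and $\mathcal O_G$ being part of the orthonormal graph basis give $M_G\,w\le\Tr\mathcal G(\rho)=1$, so $w\le 1/M_G$.

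\textbf{Step two (transfer from the twirl to $\rho$).} Write
\[
\bra G\rho\ket G-\bra G\mathcal G(\rho)\ket G=\frac1{N!}\sum_{\pi}\bra G\bigl(\rho-U_\pi\rho U_\pi^\dagger\bigr)\ket G .
\]
For each $\pi$, the operator $\Delta_\pi=\rho-U_\pi\rho U_\pi^\dagger$ is traceless and Hermitian. So $\bra G\Delta_\pi\ket G=\Tr(\ketbra GG\,\Delta_\pi)$ is bounded by $\max_{0\le P\le\id}\Tr(P\Delta_\pi)=\tfrac12\norm{\Delta_\pi}_1\le\delta_{\rm st}(\rho)$. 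This is the standard Helstrom variational formula: with $\Tr\Delta_\pi=0$, the positive part of $\Delta_\pi$ has trace exactly half the trace norm. The point to get right here is to use this form and not the cruder bound $\norm{\Delta_\pi}\le\norm{\Delta_\pi}_1=2\cdot\tfrac12\norm{\Delta_\pi}_1$, which would lose a factor of two. Averaging over $\pi$ and combining with step one gives \eqref{eq:state-concentration}.

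\textbf{Step three (the consequence).} Inserting $\bra G\rho\ket G\ge1-\varepsilon$ into \eqref{eq:state-concentration} and rearranging yields \eqref{eq:state-asymmetry-cost} immediately.

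The argument does not use diagonality of $\rho$, so it applies to coherent and mixed states alike. The only delicate point is the factor-of-two bookkeeping in step two.
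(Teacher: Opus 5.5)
Your proposal is correct and follows essentially the paper's argument: you bound the twirl's weight on $\ket G$ by $1/M_G$ using its constancy on the orbit, then transfer the bound to $\rho$ via the variational (Helstrom) form of trace distance. The only cosmetic difference is the order of steps: you apply the variational bound to each $\rho-U_\pi\rho U_\pi^\dagger$ and then average, whereas the paper first uses convexity to get $\frac12\norm{\rho-\mathcal G(\rho)}_1\le\delta_{\rm st}(\rho)$ and then applies the variational bound once.
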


\begin{proof}
Convexity of trace distance gives
\[
 \frac12\norm{\rho-\mathcal G(\rho)}_1
 \le\frac1{N!}\sum_{\pi}\frac12
 \norm{\rho-U_\pi\rho U_\pi^\dagger}_1
 \le\delta_{\rm st}(\rho).
\]
Because the twirled state is constant on the orbit diagonal,
\[
 \bra G\mathcal G(\rho)\ket G
 =\frac1{M_G}\sum_{H\in\mathcal O_G}\bra H\rho\ket H
 \le\frac1{M_G}.
\]
The variational characterization of trace distance bounds the change of the
projector probability \(\ketbra GG\), proving
\eqref{eq:state-concentration}; rearrangement gives
\eqref{eq:state-asymmetry-cost}.
\end{proof}

The same conclusion admits an entropic certificate.  Write
\[
 A_{\rm rel}(\rho)=
 D\!\left(\rho\middle\|\mathcal G(\rho)\right).
\]

\begin{corollary}[Relative-entropy labeling cost]
\label{cor:relative-labeling-cost}
Suppose $M_G>1$ and let
\(p=\bra G\rho\ket G\ge M_G^{-1}\).  Then
\begin{equation}
 A_{\rm rel}(\rho)
 \ge
 p\log(pM_G)+(1-p)\log\!\left(
 \frac{1-p}{1-M_G^{-1}}
 \right).
 \label{eq:relative-labeling-cost}
\end{equation}
For the pure labeled state \(\rho=\ketbra GG\) with $M_G>1$, equality gives
\begin{equation}
 A_{\rm rel}(\ketbra GG)=\log M_G.
\end{equation}
\end{corollary}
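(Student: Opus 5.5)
The plan is to apply the data-processing inequality for relative entropy to the two-outcome measurement $\{\ketbra GG,\id-\ketbra GG\}$. Under this channel, $\rho$ is sent to the distribution $(p,1-p)$. The twirl $\mathcal G(\rho)$ is sent to $(p',1-p')$ with
\[
p'=\bra G\mathcal G(\rho)\ket G=\frac1{M_G}\sum_{H\in\mathcal O_G}\bra H\rho\ket H\le \frac1{M_G},
\]
which is exactly the orbit-averaging identity already used in the proof of \Cref{thm:state-concentration}. Monotonicity of $D$ under the measurement channel then gives
\[
A_{\rm rel}(\rho)\ge d(p\|p'):=p\log\frac p{p'}+(1-p)\log\frac{1-p}{1-p'} .
\]

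The remaining step is to replace $p'$ by $M_G^{-1}$. For fixed $p$, the binary divergence $d(p\|x)$ is decreasing in $x$ on $(0,p]$. Its derivative is
\[
\partial_x d(p\|x)=-\frac px+\frac{1-p}{1-x}=\frac{x-p}{x(1-x)},
\]
which is $\le0$ for $x\le p$. Since $p'\le M_G^{-1}\le p$ by hypothesis, we get $d(p\|p')\ge d(p\|M_G^{-1})$, and this is exactly the right-hand side of \eqref{eq:relative-labeling-cost}.

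Two edge cases need a short check. If $p'=0$ while $p>0$, the left side is $+\infty$ and the bound holds trivially. If $p=1$, the term $(1-p)\log(\cdot)$ is read as $0$. The hypothesis $M_G>1$ ensures $1-M_G^{-1}>0$, so the expression is finite.

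For the equality statement, take $\rho=\ketbra GG$. Then $\mathcal G(\rho)=M_G^{-1}\sum_{H\in\mathcal O_G}\ketbra HH$, because the $N!$ conjugates $U_\pi\ket G$ run uniformly over the orbit, each element appearing $|\Aut(G)|$ times. Since $\rho$ is pure and its support lies inside that of $\mathcal G(\rho)$, we get
\[
D(\rho\|\mathcal G(\rho))=-\bra G\log\mathcal G(\rho)\ket G=\log M_G .
\]
Independently, the bound with $p=1$ evaluates to $\log M_G$, so equality holds.

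The only step needing care is the monotonicity argument: one must compare $p'$ with $p$ in the right direction, and this is exactly why the hypothesis $p\ge M_G^{-1}$ is imposed. Everything else is routine.
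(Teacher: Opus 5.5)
Your proposal is correct and follows the paper's proof essentially step for step. It uses data processing under the binary measurement $\{\ketbra GG,\id-\ketbra GG\}$, the orbit-average bound $p'\le M_G^{-1}$, monotonicity of the binary divergence in its reference argument on $(0,p]$, and the uniform orbit mixture for the pure-state equality, while your derivative computation and edge-case checks fill in details the paper leaves implicit (your $p'=0$ case is actually vacuous, since $G\in\mathcal O_G$ gives $p'\ge p/M_G>0$).
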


\begin{proof}
Apply data processing for relative entropy to the binary measurement
\(\{\ketbra GG,\id-\ketbra GG\}\).  Its second-state success probability is
at most \(M_G^{-1}\).  For fixed \(p\ge M_G^{-1}\), binary relative entropy
is decreasing in the reference probability on \((0,p]\), which gives
\eqref{eq:relative-labeling-cost}.  For \(\ketbra GG\), the twirl is the
uniform mixture on \(\mathcal O_G\), and the stated equality is immediate.
\end{proof}

When $M_G=1$, the symmetry orbit is a singleton.  A pure labeled graph
state is already invariant and has $A_{\rm rel}=0$; the binary expression
in \eqref{eq:relative-labeling-cost} is then unnecessary.

Thus an invariant orbit superposition may represent an unlabeled isomorphism
class without selecting a preferred labeling, whereas interpreting a state as
one already selected labeled geometry requires a symmetry resource in the
state, measurement, environment, or boundary condition.  For an asymmetric
graph, \(M_G=N!\), so the pure-state relative-entropy cost is \(\log N!\).

\section{Verification and adversarial tests}
\label{sec:verification}

The ancillary program \texttt{verify\_relational\_histories.py} reproduces
finite-matrix and combinatorial tests independently of the LaTeX proofs.
It enumerates all labeled simple graphs for $2\le N\le5$ and verifies
the orbit--stabilizer identity and the singleton-orbit classification.
For $(N,L)=(2,4),(2,6),(3,4)$ it checks the update involutions,
commutation relations, gauge identity, the binomial data-penalty spectrum,
exact history-state kernel, and the full cyclic eigenvalue multiset and gap.
It also tests permutation covariance for $N=3$.

For the Boolean construction, the code checks the CNOT equivalence,
local eigenvectors and eigenvalues, additive block spectrum and fixed-weight
selection for $N=2,3$.  It independently evaluates the twirl and
labeling-concentration inequalities on seeded random density matrices.
The archived program and its machine-readable summary
\texttt{verification\_results.json} contain the precise test counts and
maximum residuals.  A separate script
\texttt{verify\_state\_concentration.py} supplies a smaller regression
for the state-level bound.  The figure can be regenerated by
\texttt{generate\_cyclic\_gap.py}.

Three failure modes delimit the finite checks: for a generic involutive
update, odd $L$ does not close the cyclic history; without the cyclic input
pin the zero-energy space has the full data dimension; and the first
pinned-block excitation lies below the first nonzero unpinned cycle mode.
These are finite regression tests of the formulas and tensor bookkeeping,
not numerical evidence of an emergent geometric phase or substitutes for
analytic proof.

\section{Discussion}

The prelocality obstructions separate three claims that are frequently
conflated.  Exact microscopic symmetry determines interaction orbits; it does
not determine a sparse metric.  Bounded-radius statistics determine a local
profile; they do not determine global connectedness.  Closed unitarity
preserves distinguishability; it does not produce norm attraction.  Each
missing implication requires a new mechanism rather than a change of wording.

The cyclic model is now spectrally complete.  Its $N$-independent gap is not
merely bounded by a Poincar\'e estimate: it is the lowest eigenvalue of a
unit-strength rank-one defect on a normalized cycle.  The defect equation
\eqref{eq:theta-equation} shows why the asymptotic constant is $\pi^2/2$ rather
than the first unpinned cycle constant $2\pi^2$.  The data volume changes only
multiplicities, not energies.  This exact separation is what makes the
certification bounds in \cref{thm:certification,thm:stability} transparent.

The Boolean construction shows that order-free histories need not inherit the
$L^{-2}$ gap of a single diffusive clock.  Independent local clocks convert the
clock graph into a product structure, and the local input condition changes
each pair block from a projector to the exactly solvable positive operator
\eqref{eq:local-boolean-term}.  The constant gap is purchased with $M$ clock
qubits and a control-sector ground manifold.  It is therefore a resource
tradeoff, not a route to universal constant-gap Hamiltonian computation.

A natural continuation is to couple the Boolean history manifold to a genuine
sparse-sector selector and then prove a propagation bound inside that sector.
The orbit-growth theorem implies that such a selector cannot be silently
identified with microscopic locality.  It must either carry a global witness,
use a thermodynamic mechanism with quantitative defect suppression, or create
an effectively local algebra after projection.  Any of these would be a new
phase theorem beyond the finite operator results proved here.

\section{Conclusion}

Exact permutation covariance before adjacency selection is prelocal.  Complete
$k$-body orbits have incident norm
$\binom{N-1}{k-1}|J_N|\norm h$ and cannot be finite range in a metric of
diverging diameter.  Bounded-radius graph profiles do not determine
connectedness, and unitary conjugation cannot contract distinct states to one
norm attractor.

Against this obstruction background, two relational history Hamiltonians are
solved exactly.  The cyclic model has a unique ground state, a complete
rank-one-defect spectral decomposition, and the exact gap
\eqref{eq:exact-gap}, with quantitative low-energy, Gibbs, and perturbative
certificates.  The Boolean model removes any ordering of commuting pair
updates, uses only four-local pair terms, has a completely classified ground
manifold, and retains the constant gap $E_*(1-2^{-1/2})$.  The comparison
isolates the real finite resources: compact clock versus local update
resolution, uniqueness versus sector multiplicity, and diffusive versus
constant gap.  Emergent geometry remains a separate theorem obligation.

\begin{acknowledgments}
The author thanks the developers of the open-source numerical and typesetting
tools used for independent finite-matrix checks and reproducible manuscript
construction.
\end{acknowledgments}

\bibliography{references}

\end{document}